\newtheorem{theorem}{Theorem}[section]
\newcommand{\REM}[1]{}
\newtheorem{lemma}{Lemma}
\numberwithin{equation}{section}
\journal{Information Systems, Elsevier}
\begin{document}

\begin{frontmatter}



\title{Advanced Bloom Filter Based Algorithms for Efficient Approximate Data De-Duplication in Streams}

\author{Suman K. Bera}
\ead{sumanber@in.ibm.com}
\address{IBM Research, India}

\author{Sourav Dutta}
\ead{sodutta7@in.ibm.com}
\address{IBM Research, India}

\author{Ankur Narang}
\ead{annarang@in.ibm.com}
\address{IBM Research, India}

\author{Souvik Bhattacherjee\corref{cor1}}
\ead{bsouvik@cs.umd.edu}
\address{University of Maryland, College Park, USA}
\cortext[cor1]{\textbf{This work was completed at IBM Research, India.}}




\begin{abstract}
	Data intensive applications and computing has emerged as a central area of modern research with the explosion of 
	data stored world-wide. Applications involving telecommunication call data records, web pages, online transactions, 
	medical records, stock markets, climate warning systems, etc., necessitate efficient management and processing of 
	such massively exponential amount of data from diverse sources. Duplicate detection and removal of redundancy 
	from such multi-billion datasets helps in resource and compute efficiency for downstream processing. De-duplication 
	or \emph{Intelligent Compression} in streaming scenarios for approximate identification and elimination of duplicates 
	from such unbounded data stream is a greater challenge given the real-time nature of data arrival. Stable Bloom Filters 
	(SBF) addresses this problem to a certain extent. However, SBF suffers from a high false negative rate and slow 
	convergence rate, thereby rendering it inefficient for applications with low false negative rate tolerances.

	In this work, we present several novel algorithms for the problem of approximate detection of duplicates in data streams. 
	We propose the \emph{Reservoir Sampling based Bloom Filter} (RSBF) combining the working principle of reservoir sampling 
	and Bloom Filters. We also present variants of the novel \emph{Biased Sampling based Bloom Filter} (BSBF) based on 
	biased sampling concepts. Using different updation and biasing mechanisms we propose variants of the same model enabling 
	the data structure to adapt to various input scenarios. We also propose a randomized load balanced variant of the sampling 
	Bloom Filter approach to efficiently tackle the duplicate detection. In this work, we thus provide a generic framework for 
	de-duplication using Bloom Filters. Using detailed theoretical analysis we prove analytical bounds on the false positive rate, 
	false negative rate and convergence rate of the proposed structures. We exhibit that our models clearly outperform the existing 
	methods. We also demonstrate empirical analysis of the structures using real-world datasets (3 million records) and also 
	with synthetic datasets (1 billion records) capturing various input distributions. 
\end{abstract}

\begin{keyword}
De-duplication \sep Reservoir Sampling \sep Bloom Filter \sep Biased Sampling \sep Data streams


\end{keyword}

\end{frontmatter}



\section{Introduction and Motivation}
\label{sec:intro}

Data intensive computing has emerged as a central research theme in the databases and data streams community. With the tremendous spurt in the amount of data generated 
across varied applications, such as information retrieval, online transaction records, telecommunication call data records (CDR), virus databases, climate warning systems, 
web-pages and medical records to name a few, efficiently processing and managing such huge store of data has become a necessity. The problem is further compounded by the 
presence of spurious duplicates or redundant informations, leading to wastage to precious store space and compute efficiency. Hence, removal of such duplicates help to 
improve the resource utilization and compute power especially in the context of data streams requiring real-time processing at 1 GB/sec or even higher. In this work, we 
propose efficient algorithms to tackle the problem of real-time elimination of duplicate records present in large streaming applications. Formally, this is referred to 
as the \emph{data de-duplication} or \emph{Intelligent Compression} problem, and we use the terms interchangeably.

A national telecommunication network generates call data records, (CDR) storing important information such as the callee number, caller number, duration, etc., for future 
utility. However, redundant or duplicate records may be generated due to errors in the procedure. Storing of such billions of CDR in real-time in the central data repository 
calls for duplicate detection and removal to enhance performance. Typical approaches involving the use of database queries or Bloom Filter~\cite{20} are prohibitively slow or are 
extremely resource intensive requiring around 20 GB for storing 6 billion CDR. Even disk-based algorithms have a heavy performance impact. Hence, there is a paramount need 
for deduplication algorithms involving in-memory operations, real-time performance along with tolerable false positive, (FP) and false negative, (FN) rates. 

The growth of search engines provide another field of application for the deduplication algorithms. The search engines need to regularly crawl the Web to extract new URLs 
and update their corpus. Given, a list of extracted URLs, the search engines needs to perform a probe of its corpus to identify if the current URL is already present in its 
corpus~\cite{24}. This calls for efficient duplicate detection, wherein a small performance hit can be tolerated. A high FNR, leading to recrawling of a URL, will lead to a 
severe performance degradation of the search engine and a high FPR, leading to new URLs being ignored, will produce a stale corpus. Hence a balance in both FPR and FNR needs 
to be targeted.

Another interesting application for approximate duplicate detection in streaming environment is the detection of fraudulent advertiser clicks~\cite{31}. In web advertising domain, 
for the sake of profit it is possible that the publisher fakes a certain amount of the clicks (using scripts). The advertising commission necessarily need to detection 
such malpractices. Detection of same user ID or click generation IP in these cases can help minimizing frauds.

Straight-forward approaches to tackle this problem involving pair-wise string comparisons leads to quadratic time complexity prohibiting real-time performance. To address this 
issue, Bloom Filter are typically used in such domains. However, this involves huge memory requirements for tolerable performance of the algorithms and hence led to disk-based 
Bloom Filter approaches which again suffers from reduced throughput due to disk access overhead.

In order to address these challenges, we present the design of novel Bloom Filter based algorithms based on biased sampling, Reservoir sampling, and load based sampling. We 
theoretically analyze the performance of our algorithms and prove it to outperform the competing methods. We also show exhaustive empirical results to validate the enhanced 
performance of our methods in real-time. Using huge datasets of the order of billions of records, we portray better FPR, FNR and convergence to stability of the algorithms.

In the next section we present the related work and existing methods in this problem domain, and discuss the various techniques used in this work. Section~\ref{sec:rsbf} 
presents the working details of the \emph{Reservoir Sampling based Bloom Filter} algorithm, wherein we provide a novel hybrid approach based on Reservoir sampling coupled 
with Bloom Filter. To the best our knowledge this is the first such attempt at combining the two for deduplication applications. It is followed by the biased sampling based 
techniques, \emph{Biased Sampling based Bloom Filter} approaches, where biased sampling functions are used to operate on the Bloom Filters. Section~\ref{sec:rlbsbf} presents a 
randomized load balanced approach involving the load of each Bloom Filter to model its response towards each input element. We next present detailed experimental results on 
both real and synthetic datasets to exhibit the efficient performance of the proposed techniques. Finally, Section~\ref{sec:conc} concludes the work and provide possible future 
direction of work in this area.

\section{Background and Related Work}
\label{sec:rel}

Duplicate detection provides a classical problem within the ambit of data storage and databases giving rise to numerous buffering solutions. 
The advent of online arrival of data and transactions, detection of duplicates in such streaming environment using buffering and caching 
mechanisms~\cite{19} corresponds to a na\"ive solution given the inability to store all the data arriving on the stream. This led to the 
design of fuzzy duplicate detection mechanisms~\cite{8,41}.

Management of large data streams for computing approximate frequency moments~\cite{5}, element classification~\cite{23}, correlated aggregate 
queries~\cite{21} and others with limited memory and acceptable error rates have become a spotlight among the research community. \emph{Bit 
Shaving}, the problem of fraudulent advertisers not paying commission for a certain amount of the traffic or hits have also been studied in 
this context~\cite{36}. This prompted the growth of approximate duplicate detection techniques in the area of both databases and web applications. 
Redundancy removal algorithms for search engines were first studied in~\cite{11,12,29}. File-level hashing was used in storage systems to 
help detect duplicates~\cite{1,15,39}, but they provided a low compression ratio. Even secure hashes were proposed for fixed-sized data blocks~\cite{35}.

Bloom Filters were first used by TAPER system~\cite{27}. A Bloom Filter is a space-efficient probabilistic bit-vector data structure that is widely used for 
membership queries on sets~\cite{9}. Typical Bloom Filter approaches involve $k$ comparisons for each record, where $k$ is the number of hash functions 
used per record for checking the corresponding bit positions of the Bloom Filter array. However, the efficiency of Bloom Filters come at the cost of 
a small false positive rate, wherein the Bloom Filter falsely reports the presence of the query element. This occurs due to hash collision of multiple 
elements onto a single bit position of the Bloom Filter. However, there is no false negative. The probability of false positive for a standard Bloom 
Filter is given by~\cite{10}:
\begin{align}
	FPR \approx \left(1-e^{-kn/m}\right)^k \nonumber
\end{align}
Given $n$ and $m$, the optimal number of hash functions $k = \ln 2(m/n)$.

Counting Bloom Filters~\cite{16} were introduced to support the scenario where the contents of a set change over time, due to insertions and deletions. 
In this approach the bits were replaced by small counters which were updated with the insert and delete of elements. However, the support for deletion 
operations from the structure gave rise to false negatives, where an element was wrongly reported as absent from the set. To meet the needs of varied 
application scenarios, a large number of Bloom Filter variants were proposed such as the compressed Bloom Filter~\cite{32}, space-code Bloom Filter~\cite{28}, 
and spectral Bloom Filter~\cite{37} to name a few. Even window model of Bloom Filters were proposed~\cite{31} such as landmark window, jumping window, 
sliding window~\cite{38}, etc. These models operated on a definite amount of history of objects observed in the stream to draw conclusions for processing 
of future elements of the stream. Parallel variants of Bloom Filters were also explored.

Bloom Filters have been applied even to network related applications such as finding heavy flows for stochastically fair blue queue management~\cite{17}, 
packet classification~\cite{7}, per-flow state management and longest prefix matching~\cite{14}. Multiple Bloom Filters in conjunction with hash tables 
have been studied to represent items with multiple attributes accurately and efficiently with low false positive rates~\cite{26}. \emph{Bloomjoin} used 
for distributed joins have also been extended to minimize network usage for query execution based on database statistics. Bloom Filters have also been 
used for speeding up name-to-location resolution process~\cite{30}.

An interesting Bloom Filter structure proposed recently is the \emph{Stable Bloom Filter}, SBF~\cite{13}. It provides a stable performance guarantee on 
a very large stream. This constant performance is of huge importance for de-duplication applications. SBF works by continuously evicting stale information 
from the Bloom Filters. Although it achieves a tight upper bound on FPR, the stability of the algorithm is reached theoretically at infinite stream length. 
In this work we present a combination of Bloom Filter and Reservoir sampling and show that the proposed method provides lower FNR, comparable FPR, but above 
all converges to stability much faster as compared to SBF.

Finding the number of distinct elements in a stream was explored in~\cite{18}. The problem of synopsis maintenance~\cite{6,22} has been studied in great 
detail for its extensive application in query estimation~\cite{33}. Many synopsis methods such as sampling, wavelets, histograms and sketches have been 
designed for approximate query answering. A comprehensive survey of stream synopsis methods can be found in~\cite{2}. An important class of synopsis 
construction methods is the \emph{Reservoir sampling}~\cite{40}. This sampling method has great appeal as it generates a sample of original multi-dimensional 
data and can be used with various data mining applications.

In Reservoir sampling one maintains a reservoir of size $n$ from the data stream. After the first $n$ points have been added to the reservoir, subsequent 
elements are inserted into the reservoir with an \emph{insertion probability} given by $n /t$ for the $t^{th}$ element of the stream. An interesting 
characteristic of this algorithm is that it is extremely easy to implement and that all subsets of data are equi-probable to be present in the 
reservoir. Each data point is also associated with a bias function representing its probability to be inserted into the reservoir. Hence, the 
procedure can inherently capture changing behavior of the stream with different such biasing functions.

A memory-less temporal bias functions for streams for evolving streams have been proposed in~\cite{3}. Apart from $O(1)$ processing time per stream element, 
incorporating the bias results in upper bounds of reservoir sizes limiting the maximum space requirement to nearly constant in most cases even for an 
infinitely long data stream. In this work we present several biased sampling techniques on the Bloom Filters, and also propose a randomized load balanced 
biasing scheme for the de-duplication problem.

\section{Reservoir Sampling based Bloom Filter (RSBF) Approach }
\label{sec:rsbf}

In this section, we propose the design and working model of the \emph{Reservoir Sampling based Bloom Filter} (RSBF) for de-duplication in large data streams. 
$RSBF$ intelligently combines the concepts of reservoir sampling techniques~\cite{42} and that of Bloom Filter approach. To the best of our knowledge, such an 
integration has not been proposed so far in the literature.

RSBF comprises $k$ Bloom Filters, each of size $s$ bits and are initially set to $0$. On arrival of a new element, $e$ it is hashed to one of the $s$ bits in each of the 
$k$ Bloom Filters with the help of $k$ different uniform random hash functions. The existence of the element is verified by checking whether these $k$ bit 
positions are set. If all the $k$ bit positions are set to $1$, then RSBF reports the element to be duplicate, else to be distinct.
RSBF directly inserts the initial $s$ elements of the stream into the structure by setting the corresponding $k$ bit positions in the Bloom Filter. Each element 
$e_i$, for $i>s$, is then first probed against the Bloom Filter structure to determine the duplicate or distinct status. If $e_i$ is reported as distinct, it is inserted 
in the structure with probability $p_i=s/i$ (insert probability) where $i$ is the current length of the stream and $s$ is the the size of each of the Bloom filter.

However, with the increase in the number of bits set in the Bloom Filters, RSBF would suffer from a high rate of \emph{false positives} wherein a distinct element is falsely 
reported as duplicate. As the length of the 
stream increases, it can be observed that the probability of an element being a duplicate increases (since the elements are drawn from a finite universe). The reservoir 
sampling method implicitly helps to prevent such a scenario by increasingly rejecting elements from being inserted into the structure (as the \emph{insert probability} 
decreases). Insertion of elements from a possibly infinite stream would inevitable lead to the setting of nearly all the bits of RSBF to $1$, thereby incurring a high 
false positive rate (FPR). To alleviate this problem, whenever an element is inserted into RSBF, the algorithm also deletes $k$ randomly uniformly chosen bit (one from each 
Bloom Filter) by setting it to $0$. It should be observed that such deletion operation invariables leads to the presence of \emph{false negatives}, where a duplicate 
element is reported as distinct.

Applications involving duplicate detection demand low tolerance for both false positive as well as false negative rates (FNR). We observed that the use of reservoir sampling helps to keep 
the false positive rate significantly lower. However, the repeated rejection of elements (possibly distinct) with increase in the stream length may result in an increase of the FNR, thereby 
degrading the performance of RSBF. In order to address this problem, we introduce a weak form of biasing on the reservoir sampling operation performed on the stream elements. 
When the insert probability of an element decreases beyond a specified threshold, $p^*$ and is reported as distinct by probing its bits, the element is inserted. This novel combination 
of reservoir sampling with thresholding thus helps to reduce FNR to acceptable limits. This procedure also helps RSBF to dynamically adapt itself to an evolving stream.

We emphasize that along with observing a low FPR and FNR, RSBF also exhibits faster convergence to stability, as compared to that of SBF, as the setting and deletion of $k$ bits 
lead to a near constant number of 1's and 0's in the structure. The pseudo-code for the working of RSBF is presented in Algorithm~\ref{algo:rsbf} and its structure is diagrammatically 
represented by Fig.~\ref{fig:rsbf}. In the following section, we provide a detailed theoretical analysis of RSBF, and later provide empirical results to validate our claims.

\begin{algorithm}[H]
\begin{center}
        \caption{$RSBF(S)$}
        \label{algo:rsbf}
        \begin{algorithmic}
                \REQUIRE Threshold FPR ($FPR_t$), Memory in bits ($M$), and Stream ($S$)
                \ENSURE Detecting \emph{duplicate} and \emph{distinct} elements in $S$

                \medskip

                \STATE Compute the value of $k$ from $FPR_t$.
                \STATE Construct $k$ Bloom filters each having $M/k$  bits of memory.
                \STATE $iter \gets 1$ 
                \FOR{each element $e$ of $S$}
                	\STATE Hash $e$ into $k$ bit positions, $H={h_1,\cdots,h_k}$.
                    \IF{all bit positions in $H$ are set}
                    	\STATE $Result \gets DUPLICATE$
					\ELSE
						\STATE $Result \gets DISTINCT$
					\ENDIF					                   
                    \IF{ $iter \leq s$ }
                    	\STATE Set all the bit positions in $H$.
                    \ELSE
                    	\IF{$(s/iter) \leq p^{*}$}
                    		\FORALL{positions $h_i$ in $H$}
                    			\IF{$h_i=0$}
                    				\STATE Find a bit in $i^{th}$ bloom filter which is set to 1, and reset to 0.
                                    \STATE Set the bit at $h_i$ position to 1
                    			\ENDIF
                    		\ENDFOR
                    	\ELSE
                    	\STATE With probability $(s/iter)$ insert $e$ by setting all the bit positions in $H$.
                    	\STATE If $e$ was decided to be inserted then randomly reset one bit positions from each of the $k$ Bloom filters.
						\ENDIF
					\ENDIF
					\STATE $iter \gets iter +1$
                \ENDFOR

        \end{algorithmic}
\end{center}
\end{algorithm}

\begin{figure}[htpb]
\begin{center}
	\includegraphics[width=\columnwidth]{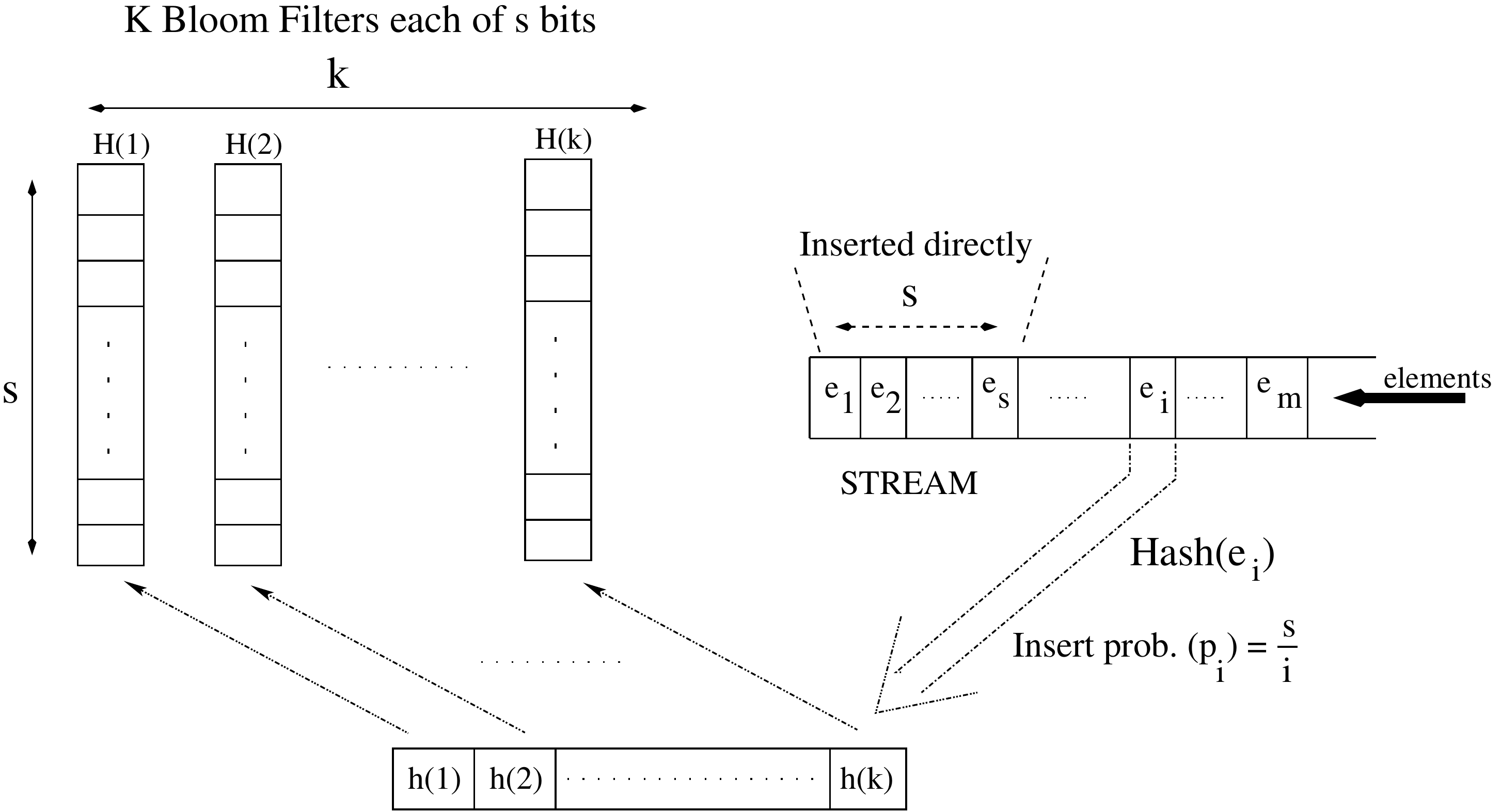}
	\caption{The working model of RSBF.}
	\label{fig:rsbf}
\end{center}
\end{figure}

\subsection{General Framework}
\label{subsec:framework}

In this section we present a generic framework for analyzing the false positive rate (FPR) and the false negative rate (FNR) of our proposed Bloom Filter based algorithms.

The event of a false positive (FP) occurs when a distinct element of the stream is reported as duplicate. A false negative (FN) event occurs when a duplicate element of the stream is reported as distinct.
Now we consider the scenarios under which FP or FN can take place. Assume $e_{m+1}$, the $(m+1)^{th}$ element of the stream to have arrived, and is hashed to $H_{m+1}={h_1,h_2,...,h_k}$ positions where 
$h_i \in [1,s]$ for the $i^{th}$ Bloom Filter. $e_{m+1}$ will be reported as a duplicate if all the bit positions in $H_{m+1}$ are already set to $1$. Let $X_{m+1}$ be the probability of this event. 
If at least one of the bit positions in $H_{m+1}$ is $0$, then $e_{m+1}$ will be reported as distinct.  Also, let us denote by $Y_{m+1}$ the probability that $e_{m+1}$ is actually a distinct element. 
Hence, we have
\begin{align}
X_{m+1} &= P(\text{all bit positions in $H_{m+1}$ are $1$ when $e_{m+1}$ arrived})\\
Y_{m+1} &= P(\text{$e_{m+1}$ is actually a distinct element})
\end{align}
Assume $FPR_{m+1}$ and $FNR_{m+1}$ denotes the probability that ${m+1}^{th}$ element of the stream is a FP and FN respectively, which we show to be determined by the quantities $X_{m+1}$ and $Y_{m+1}$. 
So,
\begin{align}
&FPR_{m+1}= P(\text{$e_{m+1}$ is actually distinct}).P(\text{$e_{m+1}$ is reported duplicate}) \nonumber\\
&= P(\text{$e_{m+1}$ is actually distinct}).P(\text{all bit positions in $H_{m+1}$} \nonumber \\
& \qquad \qquad\text{are $1$ when $e_{m+1}$ arrived}) \nonumber \\
&= Y_{m+1}.X_{m+1} \\		
&FNR_{m+1}= P(\text{$e_{m+1}$ is actually a duplicate}).P(\text{$e_{m+1}$ is reported distinct}) \nonumber\\
&= P(\text{$e_{m+1}$ is actually a duplicate}).P(\text{not all bit positions in} \nonumber \\
& \qquad \qquad \text{$H_{m+1}$ are $1$ when $e_{m+1}$ arrived}) \nonumber\\
&= (1-Y_{m+1}).(1-X_{m+1})
\end{align}
For simplicity of analysis, we consider the probability of two more events: (i) the algorithm correctly predicts $e_{m+1}$ as duplicate, (ii) the algorithm correctly predicts $e_{m+1}$ as distinct. 
Let us denote by $DUP_{m+1}$ and $DIS_{m+1}$ the probability of the event (i) and (ii) respectively. Thus, the expression for $DUP_{m+1}$ and $DIS_{m+1}$ are,
\begin{align}
&DUP_{m+1}= P(\text{$e_{m+1}$ is actually duplicate}).P(\text{$e_{m+1}$ is reported duplicate}) \nonumber\\
&= P(\text{$e_{m+1}$ is actually duplicate}).P(\text{all bit positions in $H_{m+1}$} \nonumber \\
& \qquad \qquad\text{are $1$ when $e_{m+1}$ arrived}) \nonumber \\
&= (1-Y_{m+1}).X_{m+1}
\end{align}
\begin{align}		
&DIS_{m+1}= P(\text{$e_{m+1}$ is actually a distinct}).P(\text{$e_{m+1}$ is reported distinct}) \nonumber\\
&= P(\text{$e_{m+1}$ is actually a distinct}).P(\text{not all bit positions in} \nonumber \\
& \qquad \qquad \text{$H_{m+1}$ are $1$ when $e_{m+1}$ arrived}) \nonumber\\
&= Y_{m+1}.(1-X_{m+1})
\end{align}
We assume that elements of the stream are uniformly randomly drawn from a finite universe $\Gamma$, with $|\Gamma| = U$. Then the probability that $e_{m+1}$ is indeed a distinct element, $Y_{m+1}$, is,
\begin{align}
Y_{m+1} = \left(\frac{U-1}{U}\right)^m
\end{align}
It can be observed that as $U-1 \leq U$, $Y_{m+1}$ tends to $0$ as the stream length, $m$ tends to infinity. Hence inherently the FPR for the algorithms tend to $0$ with increase in the stream length. 
This can be attributed to the fact that the probability of an incoming element to be distinct decreases with stream length as the elements of the stream are drawn from a finite universe. 
However to ensure low FNR for large streams, $X_{m+1}$ must tend to $1$ as $m$ increases. We later show that this property is maintained by our proposed algorithms, thereby attaining a very low FNR as well 
as very low FPR on large data streams.

\subsection{Analysis of RSBF}
\label{subsec:anal_rsbf}

A detailed analysis of the FPR, FNR, and convergence rate of RSBF is provided in our previous work~\cite{42}. However, the impact of $p^*$ was left as future work in that contribution. We now 
provide the complete analysis of RSBF including the factor $p^*$. Recall that for RSBF without $p^{*}$, we have
\begin{align}
\label{eq:FPR}
FPR_{m+1}&={\left(\frac{U-1}{U}\right)}^{m}.{\left[1-\frac{ks}{m}+{\left(\left[1-\frac{1}{e}\right].\frac{s}{m}\right)}^{k}\right]} \\
\label{eq:FNR}
FNR_{m+1}&\approx O\left(\frac{k}{U}\right)
\end{align}
From equation~\eqref{eq:FPR}, we observe that the right multiplicative factor tends to $1$ as the stream length $m$ reaches infinity. However, the left multiplicative factor tends to $0$ as $U-1<U$. 
Hence with increasing stream length, the FPR decreases and nearly becomes constant. From equation~\eqref{eq:FNR} we observe that the FNR becomes constant as the stream length increases. We next present 
the analysis for the RSBF in the presence of $p^*$ and show that RSBF with $p^*$ also exhibits same property. We use the generic framework discussed earlier to perform the analysis.

We now derive the expression for $X_{m+1}$ and use it for computing $FPR_{m+1}$ and $FNR_{m+1}$. Assume that $e_{m+1}$, the $(m+1)^{th}$ element of the stream hashes to $H_{m+1}={h_1,h_2,...,h_k}$ 
positions where each $h_i \in [1,s]$ for the $i^{th}$ Bloom filter. Since all the Bloom filters are identically processed, we perform the analysis for one Bloom filter and then extends the analysis for $k$ Bloom Filters.
Assume $l$ is the last iteration when $h_{i}$ bit position of the $i^{th}$ Bloom filter made a transition from $0$ to $1$ and thereafter it was never reset. The bit position $h_{i}$ will not be reset after 
the $l^{th}$ iteration under the following two conditions: \\
(i) $e_{j}(l+1\leq j\leq m)$ is not inserted by RSBF, or, \\
(ii) $e_{j}(l+1\leq j\leq m)$ is selected for insertion but some bit position other than $h_{i}$ is selected for deletion from $i^{th}$ Bloom filter.

We now define three different sequential phases in the RSBF algorithm: (i) Phase 1: During this phase all the elements of the stream is inserted into the Bloom filter structure. This phase continues 
to run till $s^th$ element of the stream is processed. (ii) Phase 2: This phase starts after Phase 1 and during this phase distinct reported elements are inserted into the Bloom filter with probability 
$s/i$ where $i$ is the current iteration. This phase ends when $p^{*}$ starts operating. (iii) Phase 3: During this phase the insertion probability of an element fall below $p^{*}$ and hence the distinct 
elements are always inserted into the Bloom Filters. We denote the position of the stream from which the effect of $p*$ starts taking place by $p$, and assume $m+1 > p$. 

$h_i$ is set to $1$ for the last time in the $l^{th}$ iteration and after that it is not reset. We denote the probability of this event by $P_{trans}^{l}$. Also we denote by $NR_{l}^{p}$ the probability 
that $h_i$ was not reset from iteration $l$ to $p$ when $l>s$. Let $NR_{l}^{m}$ capture the probability that $h_i$ is not reset in iterations from $l$ to $m$ when $l>p$. $l$ can be in one of the three phases 
that we have defined above.

If $l$ lies in the Phase 1, ($1\leq l\leq s$)
\begin{align}
P_{trans}^{l}&=P(\text{$e_l$ chooses $h_i$}).P(\text{$h_i$ was never reset thereafter}) \\ \nonumber
&=P(\text{$e_l$ chooses $h_i$}).NR_{s}^{p}.NR_{p}^{m}
\end{align}

If $l$ lies in the Phase 2, ($s\leq l\leq p$)
\begin{align}
P_{trans}^{l}&=P(\text{$e_l$ is reported distinct}).P(\text{$e_l$ is inserted}).P(\text{$e_l$ chooses $h_i$}).\\ \nonumber
&P(\text{$h_i$ was never reset thereafter}) \\ \nonumber
&=(1-X_l)\left(\frac{s}{l}\right)\left(\frac{1}{s}\right) .NR_{l}^{p}.NR_{p}^{m} \\ \nonumber
&=(1-X_l)\left(\frac{1}{l}\right).NR_{l}^{p}.NR_{p}^{m}
\end{align}

If $l$ lies in the Phase 3, ($p\leq l\leq m$)
\begin{align}
P_{trans}^{l}&=P(\text{$e_l$ is reported distinct}).\\ \nonumber
&P(\text{$e_l$ chooses $h_i$}).P(\text{$h_i$ was never reset thereafter}) \\ \nonumber
&=(1-X_l)\left(\frac{1}{s}\right).NR_{l}^{m}
\end{align}

We now derive the expressions for $NR_{s}^{p}$.
\begin{align}
NR_{s}^{p}&=P(\text{$h_i$ was not reset after it was set before $s^{th}$ iteration}) \\ \nonumber
&=\prod_{i=s+1}^{p} P(\text{$e_i$ was reported duplicate}) + P(\text{$e_i$ was reported distinct}). \\ \nonumber
&[P(\text{$e_i$ was not inserted}) + P(\text{$e_i$ was inserted}) \\ \nonumber
&P(\text{Some other bit position than $h_i$ was selected for deletion})] \\ \nonumber
&=\prod_{i=s+1}^{p} \left(X_i+\left(1-X_i\right)\left( \left( 1-\frac{s}{i}\right)+ \frac{s}{i}.\frac{s-1}{s}\right)\right) \\ \nonumber
&=\prod_{i=s+1}^{p}\left( X_i + \left(1-X_i\right)\left(1-\frac{1}{i}\right)\right)
\end{align}

Similarly $NR_{l}^{p}$ where $s\leq l < p$ is as follows,
5
\begin{align}
NR_{l}^{p}&=\prod_{i=l+1}^{p}\left( X_i + \left(1-X_i\right)\left(1-\frac{1}{i}\right)\right)
\end{align}

Expressions for $NR_{p}^{m}$ and $NR_{l}^{m}$ are similarly derived for $p\leq l < m$.
\begin{align}
NR_{p}^{m}&=\prod_{i=p+1}^{m}\left( X_i + \left(1-X_i\right)\left(1-\frac{1}{s}\right)\right) \\
NR_{l}^{m}&=\prod_{i=l+1}^{m}\left( X_i + \left(1-X_i\right)\left(1-\frac{1}{s}\right)\right)
\end{align}

It should be noted that after crossing the point $p$, all the distinctly reported data points are inserted and each time an insertion takes place, deletion occurs.
We now derive the expression for $X_{m+1}$. Since the value of $l$ can vary, we sum over all possible values of $l$ in the different ranges. The probability that $h_i$ was set in during the 
first $s$ iterations is given by $\left\lbrace 1-{\left(1-\frac{1}{s}\right)}^s\right\rbrace$. Therefore we get, for ($(m+1) > p$)
\begin{align}
X_{m+1}&={\left[A_{1-s}+A_{s-p}+A_{p-m}\right]}^{k}
\end{align}
where
\begin{align}
A_{1-s}=&\left\lbrace 1 - {\left(1-\frac{1}{s}\right)}^{s}\right\rbrace \left\lbrace NR_{s}^{p} \right\rbrace \left\lbrace NR_{p}^{m} \right\rbrace \\
A_{s-p}=&\sum_{l=s+1}^{p}\left\lbrace \left(1 - X_{l}\right)\frac{1}{l}\right\rbrace \left\lbrace NR_{l}^{p} \right\rbrace \left\lbrace NR_{p}^{m} \right\rbrace \\ 
A_{p-m}=&\sum_{l=p+1}^{m}\left\lbrace\left(1-X_l\right).\frac{1}{s}\right\rbrace \left\lbrace NR_{l}^{m} \right\rbrace
\end{align}

The probability of an FNR at point $m+1$ where $(m+1) < p$ follows directly from the above expression,
\begin{align}
\label{eq:X_m_l_p}
X_{m+1}&={\left[A_{1-s}^{'}+A_{s-m}^{'}\right]}^{k}
\end{align}
where
\begin{align}
A_{1-s}^{'}=&\left\lbrace 1 - {\left(1-\frac{1}{s}\right)}^{s}\right\rbrace \left\lbrace NR_{s}^{m'} \right\rbrace \\
A_{s-m}^{'}=&\sum_{l=s+1}^{p}\left\lbrace \left(1 - X_{l}\right)\frac{1}{l}\right\rbrace \left\lbrace NR_{l}^{m'} \right\rbrace \\ 
\end{align}

The terms $NR_{s}^{m'}$ and $NR_{l}^{m'}$ are obtained by carrying out similar analysis as before.
\begin{align}
NR_{s}^{m'}&=\prod_{i=s+1}^{m}\left( X_i + \left(1-X_i\right)\left(1-\frac{1}{i}\right)\right) \\
NR_{ l}^{m'}&=\prod_{i=l+1}^{m}\left( X_i + \left(1-X_i\right)\left(1-\frac{1}{i}\right)\right)
\end{align}

Careful observation of the expression for $X_{m+1}$ provides the following recurrence relation,
If $m\leq p$,
\begin{align}
\label{eq:X_m_leq_p}
X_{m+1}&={\left[ {\left(X_m\right)}^{\frac{1}{k}} \left\lbrace X_m + \left( 1-X_m \right) \left( 1-\frac{1}{m} \right) \right\rbrace + \left( 1-X_m \right).\frac{1}{m} \right]}^k 
\end{align}

Else ($m>p$)
\begin{align}
\label{eq:X_m_geq_p}
X_{m+1}&={\left[ {\left(X_m\right)}^{\frac{1}{k}} \left\lbrace X_m + \left( 1-X_m \right) \left( 1-\frac{1}{s} \right) \right\rbrace + \left( 1-X_m \right).\frac{1}{s} \right]}^k 
\end{align}

In the following lemma we show that $X$ is monotonically increasing and converges to $1$. As $FNR_{m+1} = (1-Y_{m+1}).(1-X_{m+1})$~\ref{subsec:framework}, $RSBF$ exhibits very low FNR with 
increase in stream length. We later present empirical results to validate the claim that $X$ converges to $1$ along with a fast convergence rate.
\begin{theorem}
\label{lem:RSBF_X}
For RSBF $X$ monotonically increases and converges to $1$. Therefore FNR tends to $0$ with increase in stream length.
\end{theorem}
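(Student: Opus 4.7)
The plan is to rewrite both recurrences \eqref{eq:X_m_leq_p} and \eqref{eq:X_m_geq_p} in the unified form $X_{m+1} = g_\alpha(X_m)^k$, where
\begin{equation*}
g_\alpha(X) := X^{1/k} + \alpha(1-X)\bigl(1 - X^{1/k}\bigr),
\end{equation*}
with $\alpha = 1/m$ in Phase 2 and $\alpha = 1/s$ in Phase 3. This follows from the identity $X + (1-X)(1-\alpha) = 1 - \alpha(1-X)$ applied to the inner bracket, after which the additive $\alpha(1-X)$ term combines with $\alpha(1-X) X^{1/k}$ to give the stated factorisation. Isolating this $g_\alpha$ form is, in my view, the main technical step: once it is in hand, both monotonicity and the identification of the limit collapse to one-line observations, whereas directly attacking the raw recurrences is awkward.

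For monotonicity and boundedness on $[0,1]$, observe first that $1 - X^{1/k} \geq 0$, so $g_\alpha(X) \geq X^{1/k}$, and raising to the $k$-th power gives $X_{m+1} = g_\alpha(X_m)^k \geq X_m$. Rearranging the definition yields $g_\alpha(X) - 1 = -(1-X^{1/k})\bigl(1 - \alpha(1-X)\bigr)$, and since $\alpha \leq 1$ and $1 - X \leq 1$ imply $\alpha(1-X) \leq 1$, we get $g_\alpha(X) \leq 1$ and therefore $X_{m+1} \leq 1$. Hence $\{X_m\}$ is monotone non-decreasing and bounded above by $1$, so by the monotone convergence theorem it tends to some limit $L \in [0,1]$.

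To pin down $L = 1$, I would restrict attention to the stable Phase 3, where $\alpha = 1/s$ is a fixed constant, so that $g_{1/s}$ is a continuous function independent of $m$. Passing to the limit in $X_{m+1} = g_{1/s}(X_m)^k$ yields $L = g_{1/s}(L)^k$, i.e.\ $L^{1/k} = L^{1/k} + \tfrac{1}{s}(1-L)(1 - L^{1/k})$, which forces $(1-L)(1 - L^{1/k}) = 0$ and hence $L = 1$. The finite Phase 2 cannot obstruct this step, since monotonicity simply carries whatever value $X_p$ attains forward as a lower bound for the Phase 3 sequence. The FNR conclusion is then immediate from the generic identity $FNR_{m+1} = (1 - Y_{m+1})(1 - X_{m+1})$ of Section~\ref{subsec:framework}: as $X_{m+1} \to 1$, the factor $1 - X_{m+1} \to 0$, and so $FNR_{m+1} \to 0$ irrespective of the behaviour of $Y_{m+1}$.
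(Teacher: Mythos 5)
Your proof is correct, and it rests on the same core algebraic identity as the paper's proof but packages it differently and, crucially, completes a step the paper leaves unjustified. The paper computes the ratio $\left(X_{m+1}/X_m\right)^{1/k} = 1 + \alpha(1-X_m)\left(X_m^{-1/k}-1\right)$ (with $\alpha = 1/m$ in Phase 2 and $\alpha=1/s$ in Phase 3), notes that this exceeds $1$ whenever $0<X_m<1$, and then simply asserts that $X$ ``converges to $1$.'' That assertion does not follow from monotonicity alone: a nondecreasing sequence bounded by $1$ converges to some $L\le 1$, and nothing in the paper's argument excludes $L<1$. Your additive reformulation $X_{m+1}=g_\alpha(X_m)^k$ with $g_\alpha(X)=X^{1/k}+\alpha(1-X)\bigl(1-X^{1/k}\bigr)$ is algebraically equivalent to the paper's ratio identity (both rest on $X+(1-X)(1-\alpha)=1-\alpha(1-X)$), but it has two advantages: it avoids dividing by $X_m$, so the base case $X_1=0$ causes no trouble, and it makes the upper bound $X_{m+1}\le 1$ explicit via the factorization $g_\alpha(X)-1=-\bigl(1-X^{1/k}\bigr)\bigl(1-\alpha(1-X)\bigr)$. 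Most importantly, your fixed-point step --- passing to the limit in the autonomous Phase-3 recurrence to obtain $(1-L)\bigl(1-L^{1/k}\bigr)=0$ and hence $L=1$ --- is precisely the missing argument that turns the paper's claim into a proof; it correctly rules out $L=0$ and every $L\in(0,1)$. The only hypotheses this step uses are the finiteness of the threshold point $p$ (so the tail of the sequence is governed by the single continuous map $g_{1/s}$) and $s\ge 1$ so that $\alpha\le 1$, both of which hold here.
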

\begin{proof}
From Eq.~\eqref{eq:X_m_l_p} and Eq.~\eqref{eq:X_m_leq_p} we observe that, $X_1 = 0$ and $X_2 = \frac{1}{m^k}$. Hence, using Eq.~\eqref{eq:X_m_leq_p}  and Eq.~\eqref{eq:X_m_geq_p} we have,
For $m \leq p$
\begin{align}
{\left(\frac{X_{m+1}}{X_m}\right)}^{\frac{1}{k}}&=X_{m}+(1-X_{m})(1-\frac{1}{m})+\frac{1}{X_{m}^{\frac{1}{k}}}.(1-X_{m}).\frac{1}{m} \nonumber \\
&=1-(1-X_m).\frac{1}{m}+(1-X_m).\frac{1}{m.X_{m}^{\frac{1}{k}}} \nonumber \\
\label{eq:ratio_m_leq_p}
&=1+(1-X_m).\frac{1}{m}.(X_{m}^{-\frac{1}{k}}-1)
\end{align}
For $m>p$
\begin{align}
{\left(\frac{X_{m+1}}{X_m}\right)}^{\frac{1}{k}}&=X_{m}+(1-X_{m})(1-\frac{1}{s})+\frac{1}{X_{m}^{\frac{1}{k}}}.(1-X_{m}).\frac{1}{s} \nonumber \\
&=1-(1-X_m).\frac{1}{s}+(1-X_m).\frac{1}{s.X_{m}^{\frac{1}{k}}} \nonumber \\
\label{eq:ratio_m_ge_p}
&=1+(1-X_m).\frac{1}{s}.(X_{m}^{-\frac{1}{k}}-1)
\end{align}
We observe that the right hand side of both Eq.~\eqref{eq:ratio_m_leq_p} and  Eq.~\eqref{eq:ratio_m_ge_p} is greater than $1$ when $X_m<1$. Therefore for $X_m \leq 1$, $X_{m+1} \geq X_{m}$ 
where the equality holds only if $X_m=1$. Hence $X$ monotonically increases and converges to $1$. As such, FNR tends to $0$ as the stream length increases.
\end{proof}

\section{Biased Sampling based Bloom Filter (BSBF) and its Variants}
\label{sec:bsbfv}

In this section, we propose variants of the Bloom Filter approach for de-duplication purposes. We put forth several versions of the biased sampling techniques with detailed analysis of their 
performance. We further show that these structures can efficiently handle evolving data streams, providing improved performance over the state-of-art. Later we exhibit empirical 
results to validate our claims. It can be observed that the various modifications discussed are modelled to handle various distributions of the input stream.

The \emph{Biased Sampling based Bloom Filter} (BSBF) approach works on similar lines as that of RSBF, albeit with a small variation on the insertion criteria. 
BSBF inserts into its structure all the elements arriving on the stream and reported as distinct, whereas the RSBF in contrast uses the insert probability based on reservoir sampling. 
The insertion follows the same steps as that of RSBF, i.e., $k$ bits are set to $1$, one from each Bloom Filter, based on $k$ uniform hash functions. 
As discussed in the earlier sections, the unbounded insertion of elements leads to an increase in the FPR of the structure. To tackle this problem, BSBF on every 
insertion deletes $k$ randomly uniformly selected bits, one from each of the $k$ Bloom Filters, thereby leading to a balance between the FPR and FNR encountered. 
This deletion procedure is the same as that followed by RSBF. It must be noted that the chosen bit for deletion might have already been set to $0$. As such, we argue 
in similar lines as that of RSBF for nearly a constant number of 1's and 0's in the BSBF structure. Hence, BSBF also exhibits the attractive property of faster convergence 
to stability, just like that of RSBF.

Since the insertion of elements in the BSBF is not restricted, with change in the nature of the input stream, BSBF also updates the bit signature of the elements stored. 
Hence, we can observe that BSBF implicitly captures the biased nature of the stream and dynamically adapts itself. We next present the theoretical analysis for the performance 
of BSBF. Algorithm~\ref{algo:bsbf1} depicts the pseudo-code of the working of BSBF.

\begin{algorithm}[H]
\begin{center}
        \caption{$BSBF(S)$}
        \label{algo:bsbf1}
        \begin{algorithmic}
                \REQUIRE Threshold FPR ($FPR_t$), Memory in bits ($M$), and Stream ($S$)
                \ENSURE Detecting \emph{duplicate} and \emph{distinct} elements in $S$

                \medskip

                \STATE Compute the value of $k$ from $FPR_t$.
                \STATE Construct $k$ Bloom filters each having $M/k$  bits of memory.
                \FOR{each element $e$ of $S$}
                        \STATE Hash $e$ into $k$ bit positions, $H={h_1,\cdots,h_k}$.
                        \IF{all bit at positions $H$ are set}
                        	\STATE $Result \gets DISTINCT$
						\ELSE
							\STATE $Result \gets DUPLICATE$
						\ENDIF

                        \IF{$e$ is DISTINCT}                      
				\STATE Randomly select k bit positions $hat{H}=\hat{h}_{1},\hat{h}_{2},...,\hat{h}_{k}$ one each from the $k$ Bloom filters.
				\STATE Reset all bits in $\hat{H}$ to $0$.	
				\STATE Set all the bits in $H$ to $1$.
			\ENDIF
                \ENDFOR
        \end{algorithmic}
\end{center}
\end{algorithm}

\subsection{Analysis of BSBF}
\label{subsec:anal_bsbf}

The probability of FPR or FNR at $e_{m+1}$ depends on the value of $X_{m+1}$, the probability that all bit positions in $H_{m+1}$ are $1$ when $e_{m+1}$ arrives. Using the framework discussed earlier in 
section~\ref{subsec:framework} and the values of $FPR_{m+1}$ and $FNR_{m+1}$, we now derive the expression of $X_{m+1}$ for $BSBF$.

The $(m+1)^{th}$ element of the stream, $e_{m+1}$ hashes to $H_{m+1} = {h_1,h_2,...,h_k}$ positions, where each $h_i \in [1,s]$ for the $i^{th}$ Bloom filter. Initially all the bits of the Bloom filters are set to $0$. 
Since all the Bloom filters are identical and independent, we first consider only a single Bloom filter and later extend our arguments for all the $k$ Bloom Filters. We assume $l$ to be the last iteration 
whence the ${h}_{i}^{th}$ bit position of the $i^{th}$ Bloom filter makes the last transition from $0$ to $1$, and thereafter $h_i$ is never reset. A bit in the Bloom filter is reset by BSBF only when 
some element is inserted. Hence, $h_i$ should not be reset in an iteration $j$, ($l+1 \leq j \leq m$) if $e_j$ is not inserted (that is $e_j$ is reported as duplicate) or if $e_j$ is selected for insertion 
and some other bit position is chosen for reset. The probability of some other bit position to be chosen is given by $\left(1-\frac{1}{s}\right)$. We denote the probability of such a transition by $P_{trans}^{(l)}$. Hence,
\begin{align}
&P_{trans}^{(l)}= P(\text{$l$ is the last iteration when $h_{i}$ is set to $1$ thereafter it was never reset}) \nonumber \\
&=P(\text{$e_{l}$ was reported distinct}).P(\text{$e_l$ chooses $h_{i}$}).P(\text{$h_{i}$ was never reset thereafter}) \nonumber\\
&=P(\text{$e_{l}$ was reported distinct}).P(\text{$e_l$ chooses $h_{i}$}) \left\lbrace \prod_{i=l+1}^{m}\left[ P(\text{$e_{i}$ was reported duplicate}) \right. \right.\nonumber \\ 
& \qquad \left. \left. + P(\text{$e_{i}$ was reported distinct}).(1-\frac{1}{s})\right] \right\rbrace \nonumber\\
&=\left\lbrace(1-X_{l}).Y_{l} + (1-X_{l})(1-Y_{l})\right\rbrace.\frac{1}{s}.\left\lbrace \prod_{i=l+1}^{m}\left[X_{i}+(1-X_{i})(1-\frac{1}{s})\right]\right\rbrace
\end{align}
Since this transition can happen in any iteration from $1$ to $m$, $l \in [1,m]$. As the same analysis holds for all the $k$ Bloom filters, we obtain the expression for $X_{m+1}$ as follows,
\begin{align}
\label{eq:X}
X_{m+1}&={\left[\sum_{l=1}^{m}\left\lbrace(1-X_{l}).Y_{l} + (1-X_{l})(1-Y_{l})\right\rbrace.\frac{1}{s}.\left\lbrace \prod_{i=l+1}^{m}\left[X_{i}+(1-X_{i})(1-\frac{1}{s})\right]\right\rbrace\right]}^{k} \nonumber \\
&={\left[\sum_{l=1}^{m}(1-X_{l}).\frac{1}{s}.\left\lbrace \prod_{i=l+1}^{m}\left[X_{i}+(1-X_{i})(1-\frac{1}{s})\right]\right\rbrace\right]}^{k}
\end{align}
Carefully observing the right hand side of the above equation, the following recurrence relation for $X_{m+1}$ holds,
\begin{align}
\label{eq:X_mon}
X_{m+1}&={\left[ {(X_{m})}^{\frac{1}{k}}\left\lbrace X_{m}+(1-X_{m}).(1-\frac{1}{s})\right\rbrace + (1-X_{m}).\frac{1}{s}\right] }^{k}
\end{align}
In the next lemma we show that $X$ is monotonically increasing and converges to $1$. As $FNR_{m+1} = (1-Y_{m+1}).(1-X_{m+1})$~\ref{subsec:framework}, $BSBF$ exhibits very low FNR with increase in stream length. 
We later present empirical result to validate the claim that $X$ converges to $1$ along with a fast convergence rate.
\begin{lemma}
\label{lem:X}
$X$ monotonically increases and converges to $1$. Therefore FNR tends to $0$ with increase in stream length.
\end{lemma}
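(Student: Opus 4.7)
The plan is to mimic the argument of Theorem~\ref{lem:RSBF_X}, adapted to the simpler single-phase recurrence \eqref{eq:X_mon}. First I would take the $k$th root of both sides of \eqref{eq:X_mon} and use the identity $X_m + (1-X_m)(1-1/s) = 1 - (1-X_m)/s$ to rewrite the step as $X_{m+1}^{1/k} = X_m^{1/k}\,[1 - (1-X_m)/s] + (1-X_m)/s$. Dividing both sides by $X_m^{1/k}$ and simplifying yields
\begin{equation*}
\left(\frac{X_{m+1}}{X_m}\right)^{1/k} = 1 + \frac{1-X_m}{s}\left(X_m^{-1/k} - 1\right),
\end{equation*}
which is the direct analogue of \eqref{eq:ratio_m_ge_p}. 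Since $0 < X_m \leq 1$ forces $X_m^{-1/k} \geq 1$, the right-hand side is at least $1$, with equality only when $X_m = 1$; this gives $X_{m+1} \geq X_m$, proving monotonicity.

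Second, I would establish $X_m \in [0,1]$ for all $m$ by induction. The recast step exhibits $X_{m+1}^{1/k}$ as a convex combination $(1-\alpha)\,X_m^{1/k} + \alpha \cdot 1$ with weight $\alpha = (1-X_m)/s \in [0,1]$, so $X_{m+1}^{1/k} \in [0,1]$; since $X_1 = 0$ and the recurrence immediately yields $X_2 = (1/s)^k > 0$, the sequence lies in $[0,1]$ and is strictly positive from $m = 2$ onward, making the ratio computation in the previous paragraph well defined.

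Third, because $\{X_m\}_{m \geq 2}$ is monotonic non-decreasing and bounded above by $1$, the monotone convergence theorem gives a limit $L \in (0,1]$. Substituting $X_m = X_{m+1} = L$ into the simplified recurrence and cancelling reduces the fixed-point condition to $(1-L)(L^{1/k} - 1) = 0$; both factors vanish in $[0,1]$ only at $L = 1$, so $X_m \to 1$. Combining this with $FNR_{m+1} = (1-Y_{m+1})(1-X_{m+1})$ from Section~\ref{subsec:framework} yields $FNR_{m+1} \to 0$ as $m \to \infty$. The main obstacle is bookkeeping rather than ingenuity — specifically, establishing the boundedness $X_m \leq 1$ before invoking the monotonicity argument, since this is what legitimises both the use of $X_m^{-1/k}$ and the application of monotone convergence; once the recurrence is placed in the same form as in the RSBF proof, the remainder of the argument runs essentially parallel to that of Theorem~\ref{lem:RSBF_X}.
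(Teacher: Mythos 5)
Your proposal is correct and follows essentially the same route as the paper's proof: the central step is the identical ratio identity $\left(X_{m+1}/X_m\right)^{1/k}=1+\frac{1-X_m}{s}\left(X_m^{-1/k}-1\right)$, from which monotonicity follows since the right-hand side exceeds $1$ whenever $X_m<1$. Your additional bookkeeping --- the inductive proof that $X_m\in[0,1]$ via the convex-combination form, and the monotone-convergence plus fixed-point argument pinning the limit at $L=1$ --- actually supplies steps the paper leaves implicit (the paper simply asserts convergence to $1$ from monotonicity, and even contains a sign typo, writing $X_{m+1}\leq X_m$ where $X_{m+1}\geq X_m$ is meant), so your version is a more complete rendering of the same argument.
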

\begin{proof}
From Eq.~\eqref{eq:X} we observe that, $X_1 = 0$ and $X_2 = \frac{1}{s^k}$. Hence, using Eq.~\eqref{eq:X_mon} we have,
\begin{align}
{\left(\frac{X_{m+1}}{X_m}\right)}^{\frac{1}{k}}&=X_{m}+(1-X_{m})(1-\frac{1}{s})+\frac{1}{X_{m}^{\frac{1}{k}}}.(1-X_{m}).\frac{1}{s} \nonumber \\
&=1-(1-X_m).\frac{1}{s}+(1-X_m).\frac{1}{s.X_{m}^{\frac{1}{k}}} \nonumber \\
\label{eq:ratio}
&=1+(1-X_m).\frac{1}{s}.(X_{m}^{-\frac{1}{k}}-1)
\end{align}
We observe that the right hand side of Eq.~\eqref{eq:ratio} is greater than $1$ when $X_m<1$. Therefore for $X_m \leq 1$, $X_{m+1} \leq X_{m}$. Hence $X$ monotonically increases and converges to $1$. 
As such, FNR tends to $0$ as the stream length increases.
\end{proof}

\subsection{BSBF with Single Deletion}
\label{subsec:bsbfsingle}

One of the major problems of deletion from a Bloom Filter arises from the fact that multiple elements may be mapped to a single bit position. Hence the deletion of that 
bit position (reset to $0$) in practice tends to delete multiple elements from the structure. \emph{Counting Bloom Filters} try to alleviate this problem to a certain 
extend, but with enormous space requirements. The heart of the problem lies in the fact that the history regarding the elements which map to a bit position is not 
stored in such structures. This invariably leads to a higher FNR. This provides the basic motivation for this variant of BSBF, \emph{BSBF with Single Deletion} (BSBFSD).

The working of the BSBFSD is in close similarity with that of BSBF. The insertion procedure follows exactly the same steps as that of BSBF. However, with each insertion 
into the structure, we uniformly randomly select one Bloom Filter from which the bit needs to be deleted. Within this selected Bloom Filter, we again uniformly randomly 
choose a bit position to be set to $0$. Hence BSBFSD performs deletion in a conservative manner so as to preserve as many element signature as possible. This leads to 
an improved performance based on FNR criteria. However, since most of the bits are set to $1$ as the stream length increases, BSBFSD incurs a higher FPR compared to 
BSBF. However, certain application like federal crime record corpuses require extremely low or zero FNR (but can compromise with relatively higher FPR tolerance), 
making BSBFSD particularly suitable for such scenarios. Next we provide theoretical results claiming that the trade-off between FPR and FNR of both the BSBF and 
BSBFSD are better paid-off than that of SBF. The pseudo-code for BSBFSD is provided in Algorithm~\ref{algo:bsbfsd}.

\begin{algorithm}[H]
\begin{center}
        \caption{$BSBFSD(S)$}
        \label{algo:bsbfsd}
        \begin{algorithmic}
                \REQUIRE Threshold FPR ($FPR_t$), Memory in bits ($M$), and Stream ($S$)
                \ENSURE Detecting \emph{duplicate} and \emph{distinct} elements in $S$

                \medskip

                \STATE Compute the value of $k$ from $FPR_t$.
                \STATE Construct $k$ Bloom filters each having $M/k$  bits of memory.
                \FOR{each element $e$ of $S$}
                        \STATE Hash $e$ into $k$ bit positions, $H={h_1,\cdots,h_k}$.
                        \IF{all bit at positions $H$ are set}
                        	\STATE $Result \gets DISTINCT$
						\ELSE
							\STATE $Result \gets DUPLICATE$
						\ENDIF

                        \IF{$e$ is DISTINCT}                      
                        	\STATE Randomly select a Bloom filter $B_{i}$
							\STATE Randomly select a bit $\hat{h}_{i}$ from the $B_{i}^{th}$ Bloom Filter
							\STATE Reset $\hat{h}_{i}$ to $0$.		
							\STATE Set all the bits in $H$ to $1$.
						\ENDIF
                \ENDFOR
        \end{algorithmic}
\end{center}
\end{algorithm}

\subsection{Analysis of BSBFSD}
\label{subsec:anal_bsbsfsd}

It can be observed that the FPR and FNR analysis of BSBFSD is similar to that of BSBF. BSBFSD differs from BSBF in the way deletion of bits take place. In BSBF whenever a new data element 
is inserted into the Bloom filters, one randomly chosen bit was reset from each of the Bloom filters. In BSBFSD, after insertion we randomly select one Bloom filter and then randomly reset 
one bit from the selected Bloom filter. The probability that the bit $h_i$ is not reset in an iteration involving the insertion of some element is $\left( \frac{1}{k}.\left(1-\frac{1}{s}\right) + (1-\frac{1}{k}) \right)$ 
or $\left(1-\frac{1}{ks}\right)$. Hence the expression for $X_{m+1}$ becomes,
\begin{align}
X_{m+1}&={\left[\sum_{l=1}^{m}(1-X_{l}).\frac{1}{s}.\left\lbrace \prod_{i=l+1}^{m}\left[X_{i}+(1-X_{i})(1-\frac{1}{ks})\right]\right\rbrace\right]}^{k} \nonumber \\
&={\left[ {(X_{m})}^{\frac{1}{k}}\left\lbrace X_{m}+(1-X_{m}).(1-\frac{1}{ks})\right\rbrace + (1-X_{m}).\frac{1}{s}\right] }^{k} \nonumber
\end{align}
Similar to the analysis shown in Section~\ref{subsec:anal_bsbf}, $X$ monotonically increases to $1$. The expression for $\frac{X_{m+1}}{X_{m}}$ is,
\begin{align}
{\left(\frac{X_{m+1}}{X_m}\right)}^{\frac{1}{k}}&=X_{m}+(1-X_{m})(1-\frac{1}{ks})+\frac{1}{X_{m}^{\frac{1}{k}}}.(1-X_{m}).\frac{1}{s} \nonumber \\
&=1+(1-X_m).\frac{1}{s}.(X_{m}^{-\frac{1}{k}}-\frac{1}{k})
\end{align}
Using similar arguments from Lemma~\ref{lem:X}, $X_{m+1} \geq X_m$ for $X_m \leq 1$ (monotonically increasing) and converges to $1$.

\section{Randomized Load Balanced Biased Sampling based Bloom Filter}
\label{sec:rlbsbf}

In this section we consider a further variation of the biased sampling based Bloom Filter approach, the \emph{Randomized Load Balanced Biased Sampling based Bloom Filter} (RLBSBF). The main aim 
of this approach is to keep the load (number of 1s) in each of the Bloom Filter below a certain ratio (probabilistically) of its total space, thereby containing the FPR and FNR achieved to a low value.
This would further ensure the stability of the Bloom Filter performance by keeping the count of $0$ and $1$ nearly constant. 

The insertion procedure of an element arriving in the data stream remains the same as that of the other algorithms discussed earlier. That is, when a distinct element arrives, it is 
inserted into the Bloom Filters by appropriately setting the bit positions. However, there is a major difference in the deletion procedure. Whenever an element is inserted, we 
independently access each of the Bloom Filters and based on its current load factor probabilistically decide whether to delete (reset) a random bit or not. This approach intuitively tries to restrict 
FPR to a small value by limiting the number of 1s in each Bloom Filter, along with deleting as less history as possible to obtain a low FNR as well.

Formally, RLBSBF stores the load (the number of bits set) in each of the Bloom Filters to decide the probability of deletion of a bit after the insertion of an element. When an element $e$ is 
reported as distinct by the algorithm, we insert $e$ into the Bloom Filters. Like before, this is done by setting the bits is $H$ to $1$. But instead of performing deterministic deletion from 
the Bloom Filters, we perform randomized deletion from the Bloom Filters. From each filter we first select a random bit position and then reset it with probability $L_{m+1}(i)/s$, where $L_{m+1}(i)$ 
denotes the load of the $i^{th}$ Bloom filter when $e_{m+1}$ arrived. The detailed algorithm is given in Algorithm~\ref{algo:rlbsbf}.

Empirically we observed (results shown later) that this approach in turn enforces a low FPR and the lowest FNR for the competing algorithms. The efficient performance of RLBSBF can be attributed 
to the prevention of the number of set bits in each Bloom filter from becoming high, reducing the FPR, and on the other hand, performing load based deletion of bits (lesser deletion events) curbing down the FNR.

\begin{algorithm}[H]
\begin{center}
        \caption{$RLBSBF(S)$}
        \label{algo:rlbsbf}
        \begin{algorithmic}
                \REQUIRE Threshold FPR ($FPR_t$), Memory in bits ($M$), and Stream ($S$)
                \ENSURE Detecting \emph{duplicate} and \emph{distinct} elements in $S$

                \medskip

                \STATE Compute the value of $k$ from $FPR_t$.
                \STATE Construct $k$ Bloom filters each having $M/k$  bits of memory.
                \FOR{each element $e$ of $S$}
                        \STATE Hash $e$ into $k$ bit positions, $H={h_1,\cdots,h_k}$.
                        \IF{all bit at positions $H$ are set}
                        	\STATE $Result \gets DISTINCT$
						\ELSE
							\STATE $Result \gets DUPLICATE$
						\ENDIF

                        \IF{$e$ is DISTINCT}                      
                        	\FORALL{Bloom filter $B_{i}$}
                        		\STATE Select a random bit position $\hat{h_{i}}$.
                        		\STATE Reset $\hat{h_{i}}$ with probability $L(i)/s$ where $L(i)$ is the number of ones in the Bloom filter $B_{i}$.
                        	\ENDFOR
							\STATE Set all the bits in $H$ to $1$.
						\ENDIF
                \ENDFOR
        \end{algorithmic}
\end{center}
\end{algorithm}

\subsection{Analysis of RLBSBF}
\label{subsec:anal_rlbsbf}

For computing the probability of FNR and FPR of RLBSBF, we initially evaluate the expected load of the Bloom Filters at any iteration. Let $L_{m+1}^{i}$ denotes the expected number of bits 
set in the $i^{th}$ Bloom Filter when element $e_{m+1}$ arrives and is hashed to $h_i$ bit position of the $i^{th}$ Bloom Filter. We are interested in finding out the probability of $h_i$ 
being already set to $1$ for all $i \in [1,k]$. This gives us $X_{m+1}$, which in turn determines the FPR and FNR. 

We assume $l$ to be the last iteration when $h_i$ was set to $1$ and after that it was never reset. $h_i$ will not be reset in an iteration $j (l+1\leq j \leq m)$ if one the following events occur: 
(i) $e_j$ was reported as duplicate and hence not inserted into the Bloom Filter, (ii) $e_j$ was inserted into the Bloom filter and with probability $1-\frac{L_{m+1}^{i}}{s}$ no bits were reset in the 
$i^{th}$ Bloom Filter, (iii) $e_j$ was inserted into the Bloom Filter and with probability $\frac{L_{m+1}^{i}}{s}$ deletion from $i^{th}$ Bloom Filter was performed, but some bit other than $h_i$ was 
chosen for deletion with probability $\left(1-\frac{1}{s}\right)$. It can be observed that this argument applies to all the $k$ bloom filters, and that $l$ can vary from $1$ to $m$.
Therefore, the expression of $X_{m+1}$ for the RLBSBF algorithm is as,
\begin{align}
&X_{m+1}=P(\text{all bit positions in $H_{m+1}$ are $1$ when $e_{m+1}$ arrived}) \nonumber\\
&={\left[\sum_{l=1}^{m}P(\text{$l$ is the last iteration when $h_i$ is set to $1$, thereafter it was never reset}) \right]}^k \nonumber \\
&={\left[\sum_{l=1}^{m}P(\text{$e_l$ was reported distinct}).P(\text{$e_l$ chooses $h_i$}).P(\text{$h_i$ was never reset thereafter}) \right]}^k \nonumber \\
&={\left[\sum_{l=1}^{m}(1-X_{l}).\frac{1}{s}.\left\lbrace \prod_{i=l+1}^{m}\left[X_{i}+(1-X_{i})\left((1-\frac{L_{m+1}^{i}}{s})+\frac{L_{m+1}^{i}}{s}(1-\frac{1}{s})\right)\right]\right\rbrace\right]}^{k} \nonumber \\
&={\left[\sum_{l=1}^{m}(1-X_{l}).\frac{1}{s}.\left\lbrace \prod_{i=l+1}^{m}\left[X_{i}+(1-X_{i})(1-\frac{L_{m+1}^{i}}{s^2})\right]\right\rbrace\right]}^{k} \\
&={\left[ {(X_{m})}^{\frac{1}{k}}\left\lbrace X_{m}+(1-X_{m}).(1-\frac{L_{m+1}^{i}}{s^2})\right\rbrace + (1-X_{m}).\frac{1}{s}\right] }^{k}
\end{align}
We now find the expected load, $L_{m+1}^{i}$ of the $i^{th}$ Bloom Filter at the time when $e_{m+1}$ arrived. Let us associate an indicator random variable $I_j$ with each bit $j$ of the Bloom filters. 
$I_j$ is $1$ if $j=1$, otherwise $0$. Also let $Z^{i}$ be random variable such that $Z^{i}=\sum_{j=1}^{s}I_j$. Therefore $E[Z^{i}]$ will give us the expected load of the $i^{th}$ Bloom filter.
\begin{align}
L_{m+1}^{i}&=E[Z^{i}]\nonumber \\
&=\sum_{j=1}^{s}E[I_j] \nonumber \\
&=\sum_{j=1}^{s}P(\text{$j^{th}$ bit was set to 1 when $e_{m+1}$ arrived}) \nonumber
\end{align}

\section{Experiments and Results}
\label{sec:expt}

In this section we empirically compare the performances of SBF, RSBF, BSBF, BSBFSD and RLBSBF algorithms against various parameters like memory, stream size, percentage of distinct elements 
in the stream etc. We measure the performances on real dataset containing clickstream data (obtained from \url{http://www.sigkdd.org/kddcup/index.php?section=2000&method=data}) having 
around 3M elements as well as on uniformly and randomly generated datasets with upto 1B records. In all experiments $p^*$ for RSBF has been set to 0.03. Also we present experimental results 
for choosing the value of $k$ for BSBF, BSBFSD and RLBSBF algorithms. We show that our proposed algorithms are comparable or outperforms SBF with respect to FPR and FNR while exhibiting better 
stability and faster convergence properties.

\subsection{Setting of Parameters}
\label{subsec:param}

In this section we present the rationale behind setting of the parameter $k$ (the number of Bloom filters) for the various proposed algorithms. Given a fixed memory space $M$, we experimentally 
search for an optimal value of $k$ such that an overall low FPR and FNR is attained.

For RSBF, we have shown in our previous work~\cite{42} that
\begin{align}
\label{RSBF_k}
k=\frac{\ln{({FPR}_t)}}{\ln{\left(1-\frac{1}{e}\right)}}
\end{align}
We also observed that with increase in $k$ FPR decreases, but FNR is minimized when $k=1$. As a trade-off we set $k$ as the arithmetic mean of $1$ and that obtained in ~\eqref{RSBF_k}. The threshold FPR ${FPR}_t$ is set to 0.1.

We now present the performance of BSBF algorithm under various parametric settings of $k$. We chose a uniform random dataset of size 1B with 60\% distinct element. We also vary the memory size from 8MB 
to 512MB and analyze the FPR and FNR.  From Table~\ref{BSBF-I:1B:60distinct} we observe that BSBF exhibits low FNR and high FPR for $k=1$. As we increase the value of $k$, FPR decreases while FNR increases. 
The increase is FNR is attributed to the fact that by increasing the the value of $k$, more element signatures are being deleted from the Bloom Filter structures. (Every time some element is inserted, one bit 
from each of the $k$ Bloom filter is reset to $0$). The decrease in FPR is due to the increase in the signature length of an element in the Bloom Filter. We observe that for $k=2$, both FPR and FNR attains a 
acceptably balanced limit. Hence for performance evaluation of BSBF algorithm, we set $k=2$ for the result of our experimental setup. We also observe that if higher memory space is available, then BSBF attains 
very low FNR (3.4\%) and FPR (6.4\%) for $k=1$. Hence depending on the application specifications BSBF can be modeled to perform efficiently.

\begin{table}[htbp]
\begin{center}
\begin{tabular}{|l|l|l|l|l|l|l|}
\hline
\multicolumn{7}{|c|}{Algorithm:BSBF, Dataset:1B , Distinct:60\% } \\ \hline
Space &  & k=1 & k=2 & k=3 & k=4 & k=5 \\ \hline \hline
\multirow{2}{*}{8 MB} & \% FPR & 75.979 & 35.4924 & 15.2961 & 6.95161 & 3.28054 \\ \cline{2-7}
& \% FNR & 9.20901 & 59.0297 & 82.0828 & 91.5512 & 95.7754\\ \hline \hline
\multirow{2}{*}{128 MB} & \% FPR & 21.0883 & 11.9472 & 6.67181 & 3.47978 & 1.76989 \\ \cline{2-7}
& \% FNR & 9.83893 & 34.7514 & 59.2496 & 75.1262 & 84.1857 \\ \hline \hline
\multirow{2}{*}{512 MB} & \% FPR & 6.46215 & 1.82011 & 0.777613 & 0.386642 & 0.205833 \\ \cline{2-7}
& \% FNR & 3.34108 & 13.5658 & 27.7512 & 43.4681 & 56.9057 \\ \hline \hline
\end{tabular}
\caption{Synthetic Dataset of 1B elements (60\% distinct)}
\label{BSBF-I:1B:60distinct}
\end{center}
\end{table}


We next present the performance of BSBFSD algorithm across various parametric setting of $k$. We use the some random dataset of size 1B with 60\% distinct element and vary the memory size from 8MB to 512MB. 
From Table \ref{BSBFSD:1B:60distinct}, we observe that BSBFSD exhibits increasing FNR and decreasing FPR for increasing $k$ when the memory size is high (128MB and 512MB). The decrease in FPR is due to increase 
in the signature length of an element in the Bloom Filter with increasing $k$. But as a trade-off FNR increases, as chances of more elements being mapped to a point of deletion increases. We observe that for $k=2$, 
both FPR and FNR attains a reasonably balanced limit. Hence we set $k=2$ for BSBFSD algorithm's performance analysis.

\begin{table}[htbp]
\begin{center}
\begin{tabular}{|l|l|l|l|l|l|l|}
\hline
\multicolumn{7}{|c|}{Algorithm:BSBFSD, Dataset:1B , Distinct:60\% } \\ \hline
Space &  & k=1 & k=2 & k=3 & k=4 & k=5 \\ \hline \hline
\multirow{2}{*}{8 MB} & \% FPR & 75.9827 & 78.6056 & 81.2512 & 83.3572 & 85.0294 \\ \cline{2-7}
& \% FNR & 9.2090 & 7.8180 & 6.4065 & 5.3588 & 4.5864\\ \hline \hline
\multirow{2}{*}{128 MB} & \% FPR & 21.0876 & 15.5928 & 15.2615 & 16.4467 & 18.2903 \\ \cline{2-7}
& \% FNR & 9.8405 & 17.7438 & 22.9224 & 26.2752 & 28.3608 \\ \hline \hline
\multirow{2}{*}{512 MB} & \% FPR & 6.4618 & 1.9692 & 1.0095 & 0.6880 & 0.5590 \\ \cline{2-7}
& \% FNR & 3.3399 & 6.8855 & 9.9011 & 12.6723 & 15.1324 \\ \hline \hline
\end{tabular}
\caption{Synthetic Dataset of 1B elements (60\% distinct)}
\label{BSBFSD:1B:60distinct}
\end{center}
\end{table}

Finally, we present the performance of RLBSBF algorithm across various parametric setting of $k$. We use the same random dataset of size 1B with 60\% distinct element and again vary the memory size from 8MB to 512MB. 
From Table \ref{RLBSBF:1B:60distinct}, we observe that RLBSBF exhibits similar performance as above. Likewise we set $k=2$.

\begin{table}[htbp]
\begin{center}
\begin{tabular}{|l|l|l|l|l|l|l|}
\hline
\multicolumn{7}{|c|}{Algorithm:RLBSBF, Dataset:1B , Distinct:60\% } \\ \hline
Space &  & k=1 & k=2 & k=3 & k=4 & k=5 \\ \hline \hline
\multirow{2}{*}{8 MB} & \% FPR & 81.4726 & 52.9291 & 30.2439 & 17.3562 & 10.1282 \\ \cline{2-7}
& \% FNR & 5.3103 & 40.466 & 66.0585 & 80.3262 & 88.3195\\ \hline \hline
\multirow{2}{*}{128 MB} & \% FPR & 22.8676 & 16.3554 & 12.3957 & 8.56757 & 5.5645 \\ \cline{2-7}
& \% FNR & 2.4337 & 15.0161 & 35.6592 & 55.2835 & 69.3118 \\ \hline \hline
\multirow{2}{*}{512 MB} & \% FPR & 6.6563 & 2.07884 & 1.0555 & 0.6773 & 0.4828 \\ \cline{2-7}
& \% FNR & 0.2553 & 1.98971 & 6.1959 & 13.3911 & 23.2844 \\ \hline \hline
\end{tabular}
\caption{Synthetic Dataset of 1B elements (60\% distinct)}
\label{RLBSBF:1B:60distinct}
\end{center}
\end{table}

\subsection{Quality Comparison}
\label{subsec:qual}

In this section we present the variation of FPR and FNR along with convergence to stability for SBF, RSBF, BSBF, BSBFSD and RLBSBF with increasing number of records in the input stream. 
%
%
We present graph based analysis of FPR and FNR performances of our algorithms in comparison with SBF for 1B data with 15\% distinct element. Fig.~\ref{fig:fpr_fnr128} shows the graph for FPR and FNR when 
128MB memory space is available. We observe that all the algorithms achieve quite low FPR in the range of 1\%-2\%. Also the FPR becomes stable at around 300M-350M data points for almost all the 
variations. However we observe a sharp contrast in FNR level. While SBF exhibits nearly 45\% FNR, RLBSBF exhibits less than 1\% FNR which is almost 70x times improvement. Other variations also exhibit 
strong improvements in the FNR performance in comparison with SBF. We also observe that for BSBF, BSBFSD and RLBSBF the FNR level keeps on decreasing as the stream length increases. This validates our 
theoretical claim about low FNR for these algorithms.
\begin{figure}[htbp]
\begin{center}
	\includegraphics[width=0.9\columnwidth]{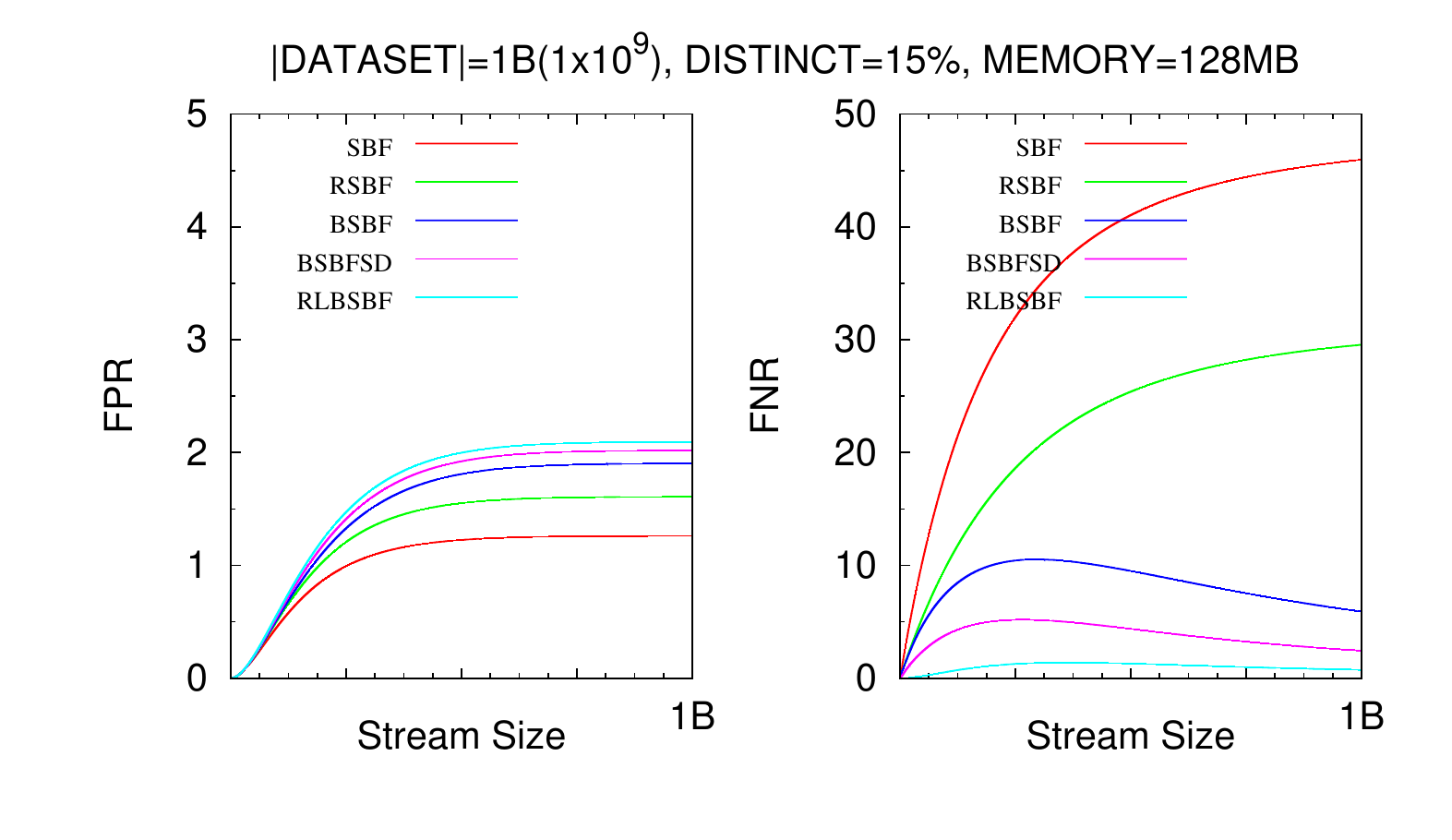}
	\caption{FPR and FNR Performances.}
	\label{fig:fpr_fnr128}
\end{center}
\end{figure}
%
%
Fig.~\ref{fig:fpr_fnr256} presents the FPR and FNR for 1B random dataset with 256MB memory space. FPR for all the algorithms falls within a small range of 0.4\%-0.6\%, however FNR varies from 30\% for SBF to 
0.2\% for RLBSBF. Also the curves provide empirical evidence for our theoretical results that FNR tends to zero as we increase the stream length. SBF does not exhibit this property and its FNR increases 
as the stream length increases.
%
\begin{figure}[htbp]
\begin{center}
	\includegraphics[width=0.9\columnwidth]{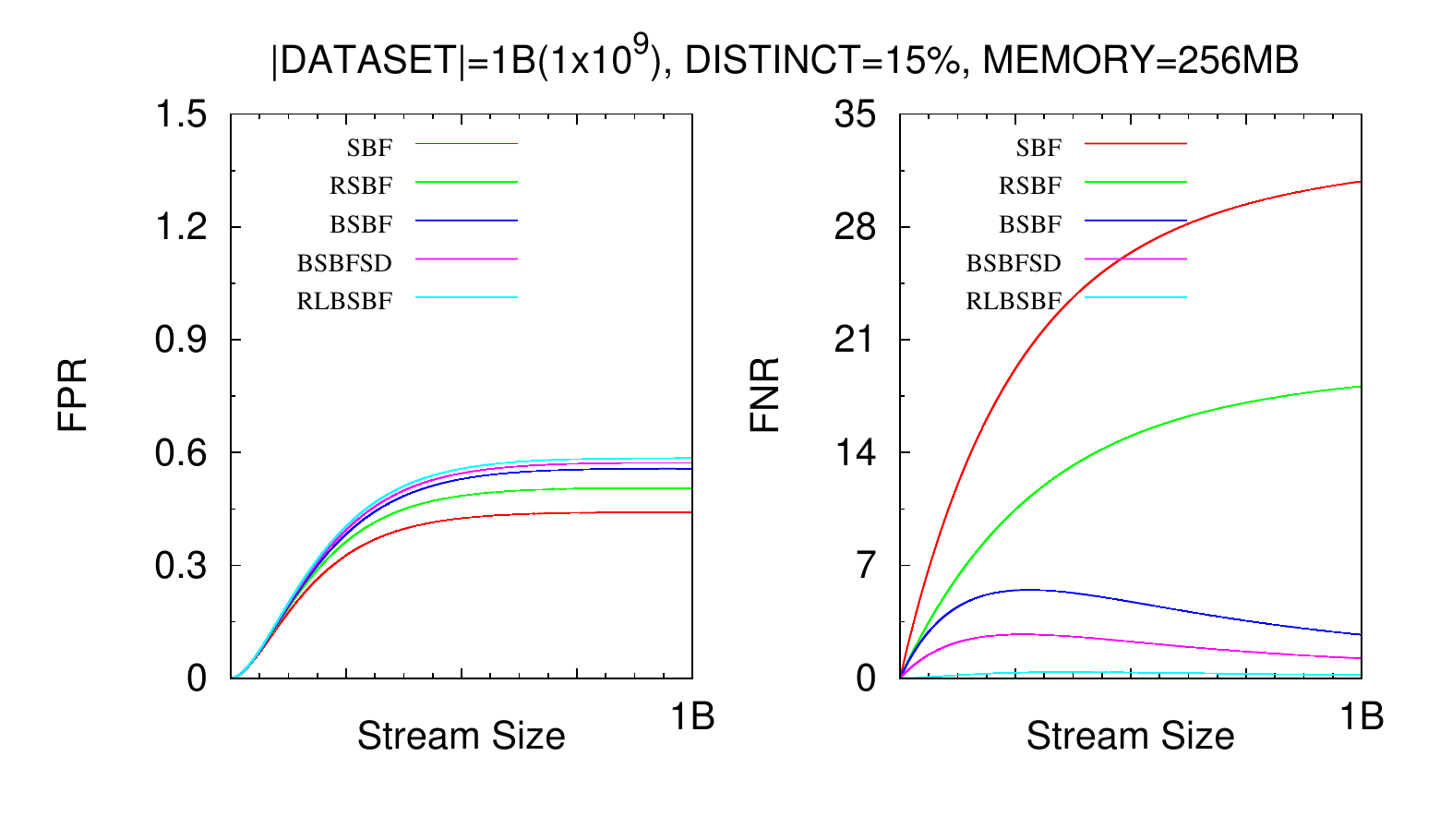}
	\caption{FPR and FNR Performances.}
	\label{fig:fpr_fnr256}
\end{center}
\end{figure}
%
%
Fig.~\ref{fig:fpr_fnr512} shows that as we increase memory space from 256MB to 512MB, FNR further drops to a very low limit (less than 1\%) for BSBF, BSBFSD, RLBSBF. RSBF also achieves an improvement of 2x in 
terms of FNR over SBF. FPR of all the algorithm stabilizes at a very low level and remains comparable.
%
\begin{figure}[htbp]
\begin{center}
	\includegraphics[width=0.9\columnwidth]{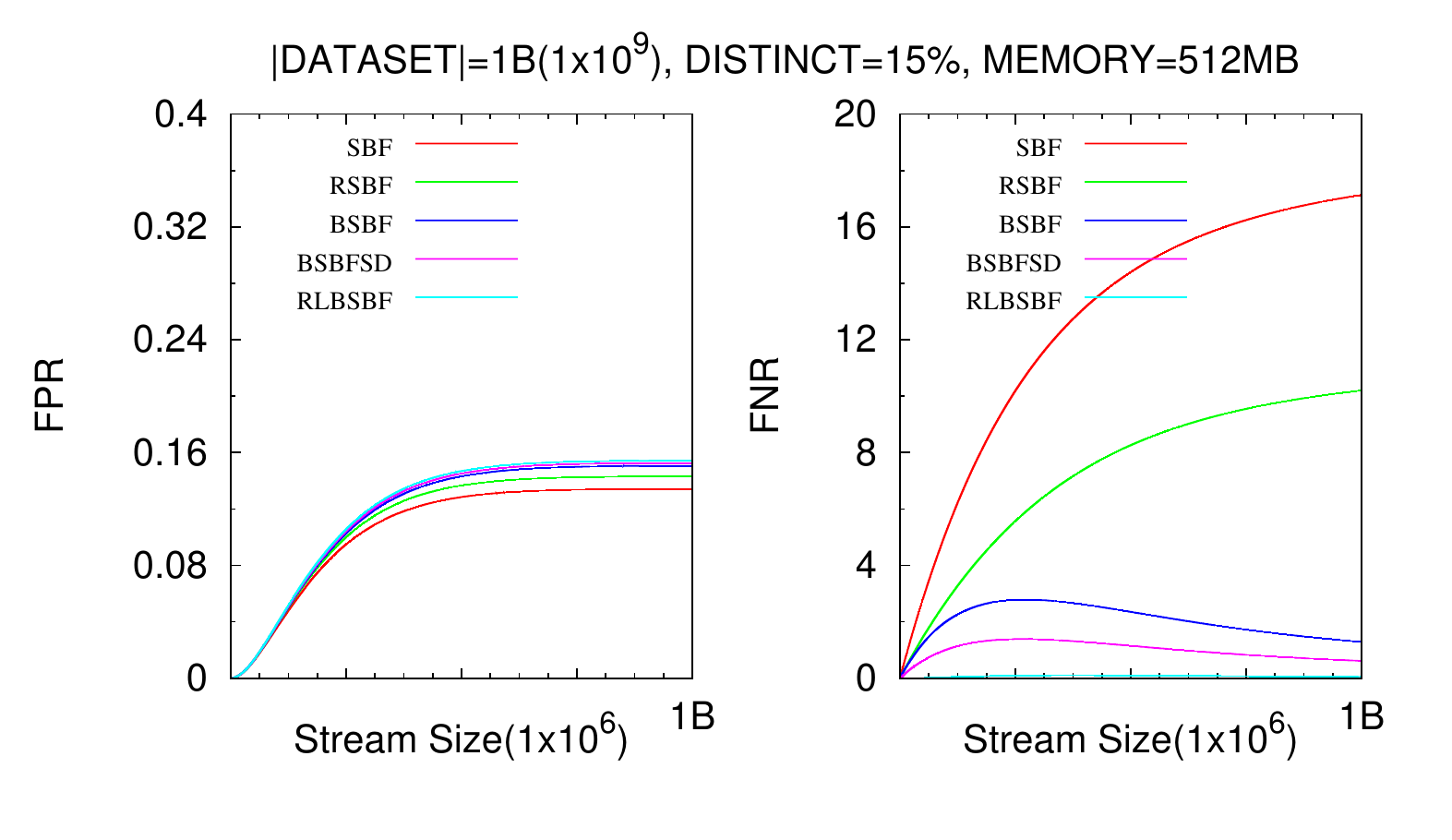}
	\caption{FPR and FNR Performances.}
	\label{fig:fpr_fnr512}
\end{center}
\end{figure}

%
%

We next compare the FPR and FNR of the proposed algorithms for the 1B synthetic random dataset but with 60\% distinct element. We vary the memory size from 128MB to 512MB. To emphasize the comparability of FPR and 
large improvements of FNR between SBF and our algorithms, we have plotted the graphs using natural scale. Fig.~\ref{fig:fpr_fnr12860} shows FPR and FNR comparison for 128MB memory space. Although SBF exhibits good FPR, 
it has a very poor FNR(70\%). RLBSBF curbs down the FNR to almost 10\% while losing merely 3\% on FPR. As evident from the graph, the gain over FNR dominates the loss over FPR. Also BSBF and BSBFSD attain stability 
in FNR ratio from around 500M data points.
%
%
\begin{figure}[htbp]
\begin{center}
	\includegraphics[width=0.9\columnwidth]{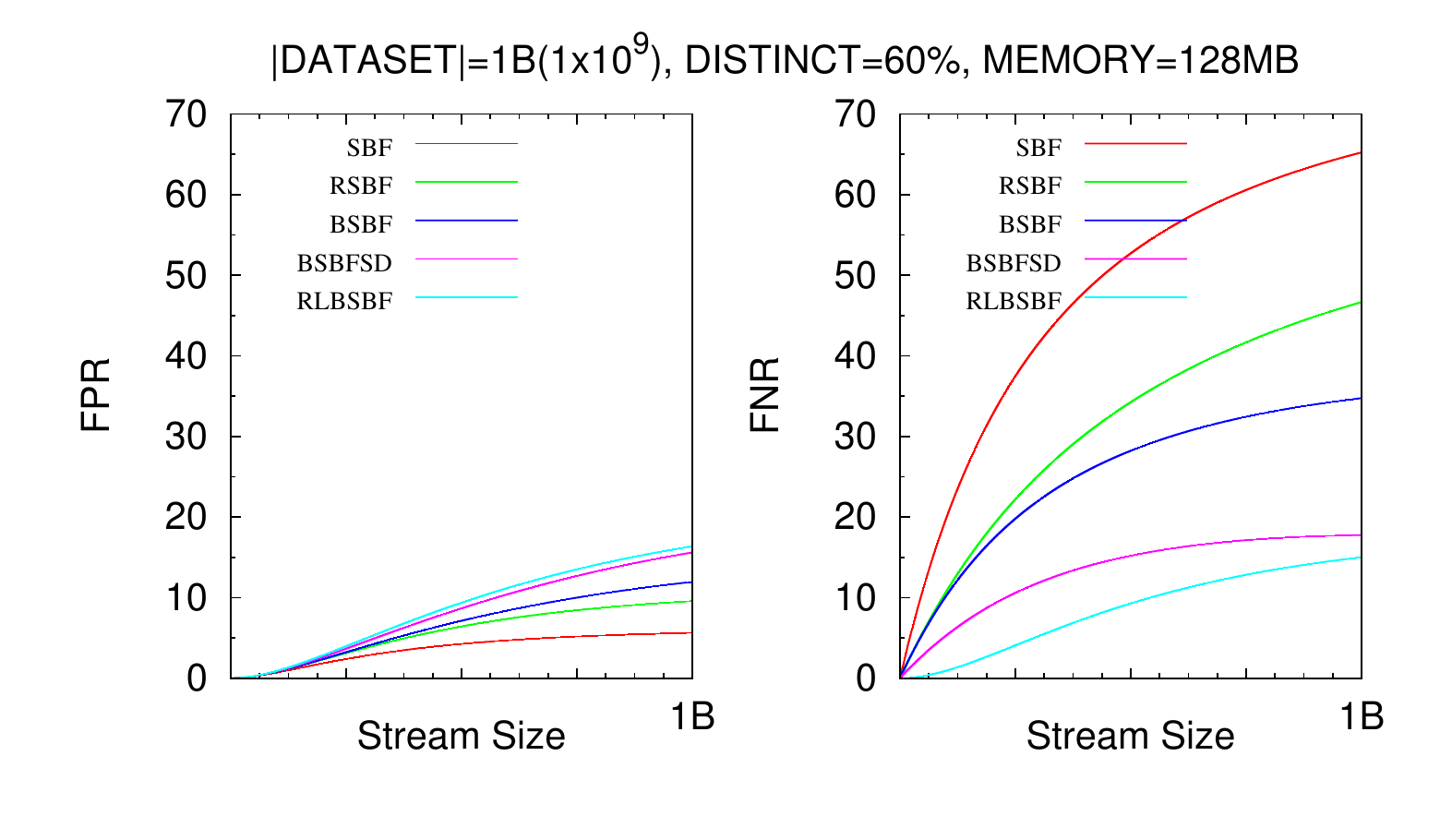}
	\caption{FPR and FNR Performances.}
	\label{fig:fpr_fnr12860}
\end{center}
\end{figure}
%
%
%
Fig.~\ref{fig:fpr_fnr25660} depicts the FPR and FNR of various algorithm for 1B data(60\% distinct) when 256MB memory space is used. We observe that the gap in FPR among the algorithms have been reduced significantly. 
The range of FPR for the algorithms lies between 4\%-7\%. But the difference in FNR achieved by the algorithms is very high. RSBF, BSBF, BSBFSD and RLBSBF provides an improvement of 2x, 2.5x, 5x and 10x respectively. 
We also observe a similar FNR stability trend as previously for 128MB memory.
%
%
\begin{figure}[htbp]
\begin{center}
	\includegraphics[width=0.9\columnwidth]{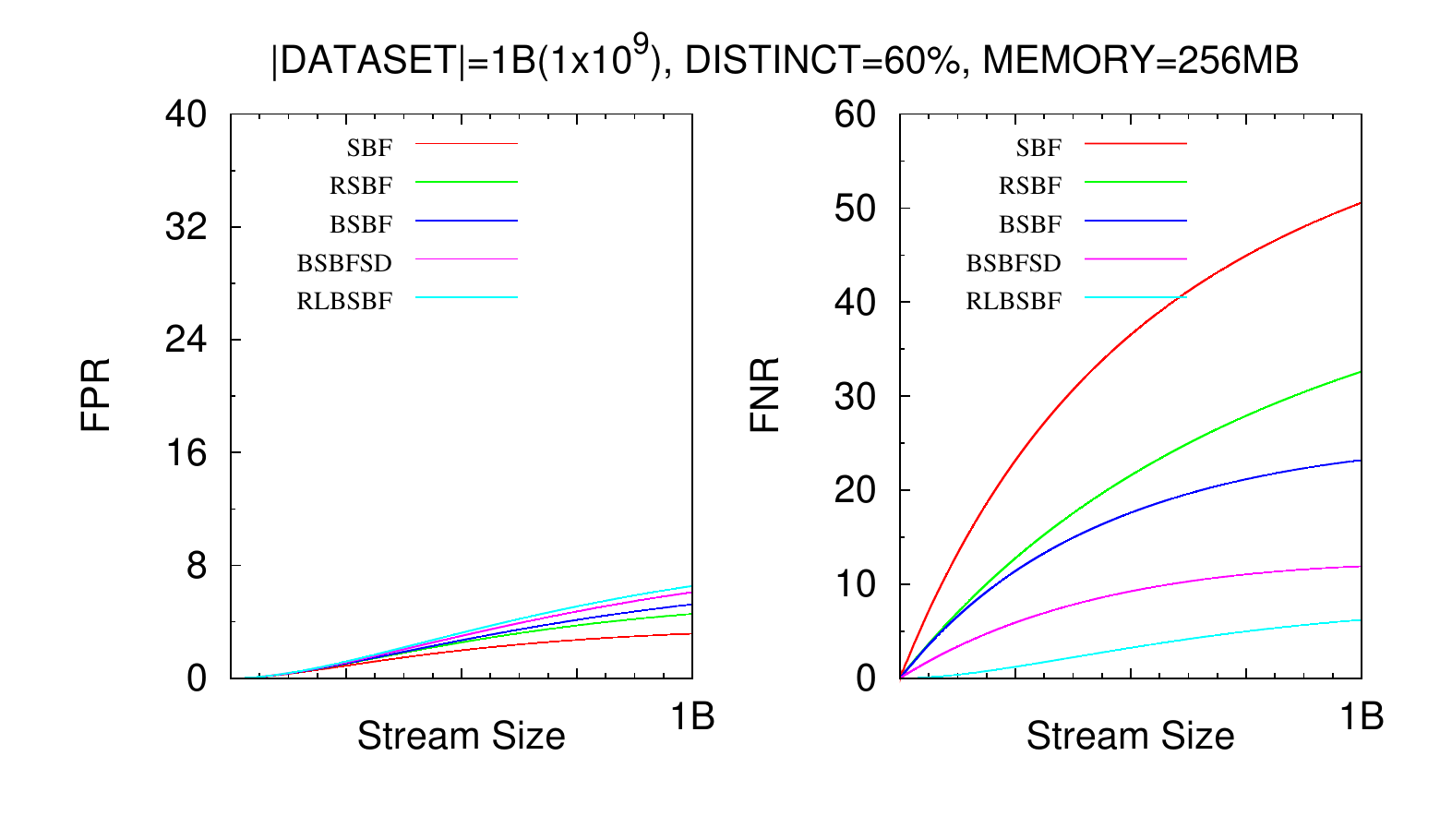}
	\caption{FPR and FNR Performances.}
	\label{fig:fpr_fnr25660}
\end{center}
\end{figure}
%
%
%
Finally Fig.~\ref{fig:fpr_fnr51260} shows that the FPR of the algorithms converges to a very low limit when 512MB memory space is used. But the difference in FNR level only keeps on improving in favor of our proposed 
algorithms. This also shows the scalability aspect of our algorithms. As we increase the memory space, RSBF, BSBF, BSBFSD and RLBSBF improves in performance much faster than SBF.
%
%
\begin{figure}[htbp]
\begin{center}
	\includegraphics[width=0.9\columnwidth]{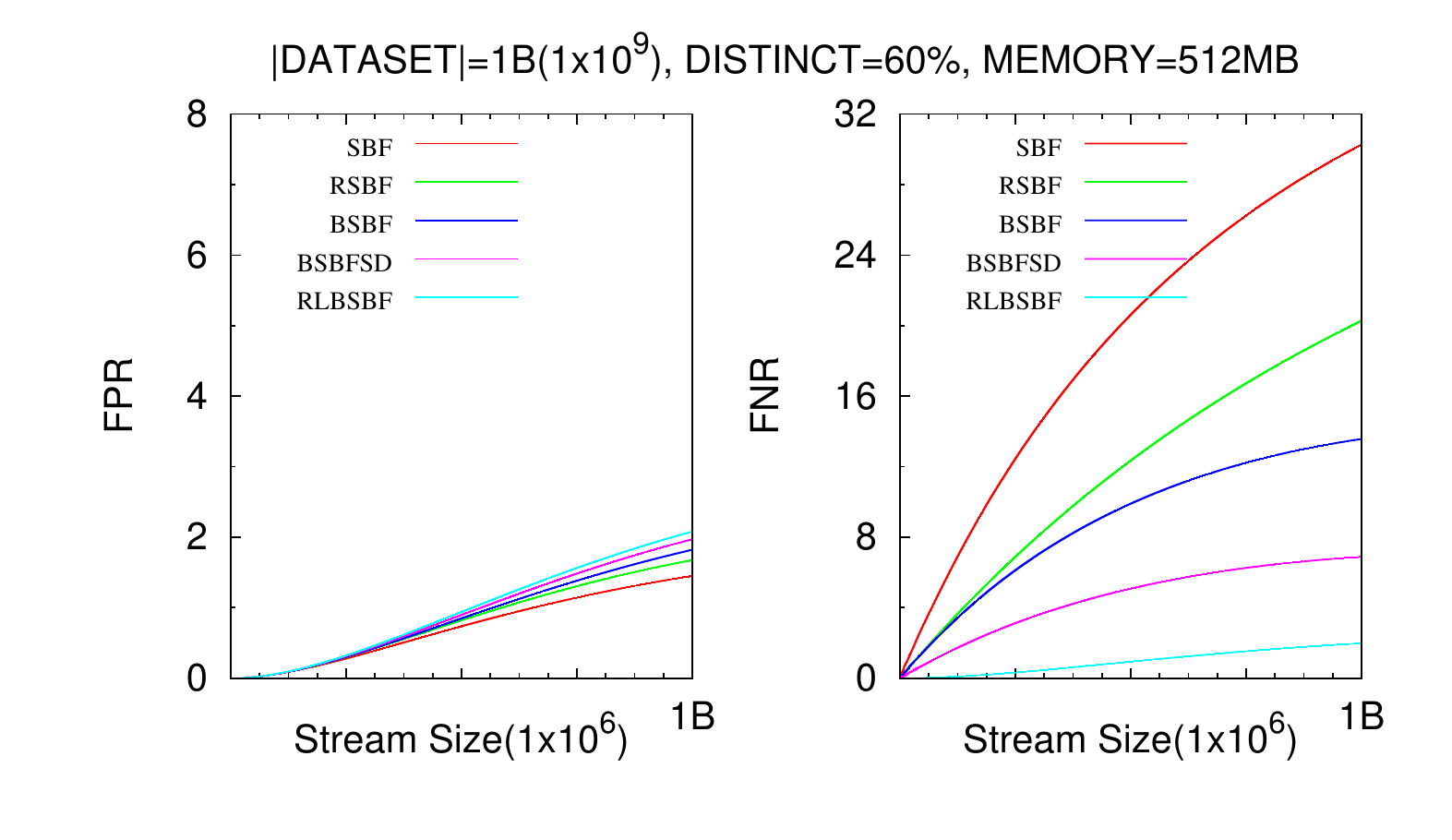}
	\caption{FPR and FNR Performances.}
	\label{fig:fpr_fnr51260}
\end{center}
\end{figure}

\begin{figure}[htbp]
\begin{center}
	\includegraphics[width=0.9\columnwidth]{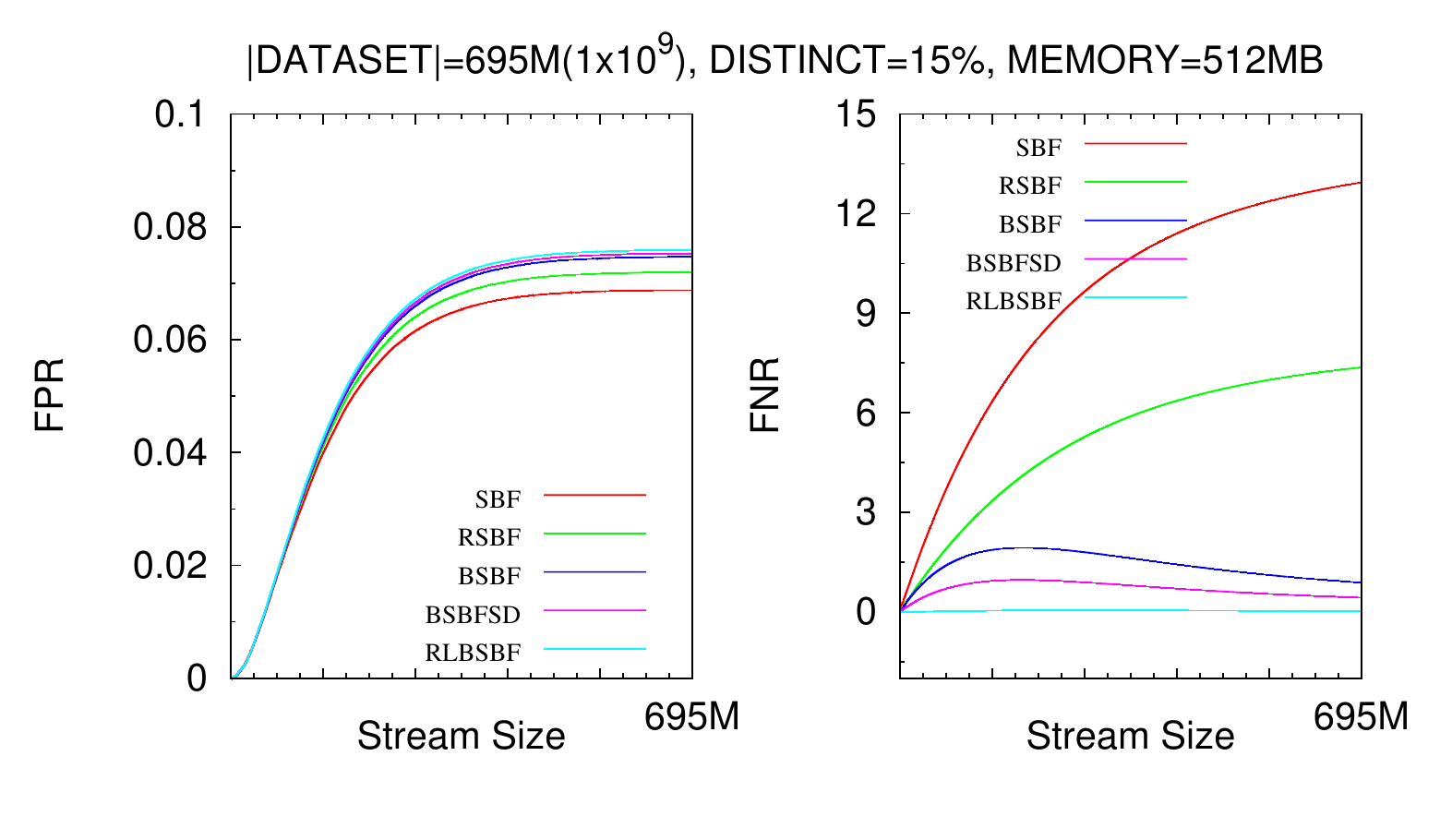}
	\caption{FPR and FNR Performances.}
	\label{fig:fpr_fnr15512}
\end{center}
\end{figure}
\begin{figure}[htbp]
\begin{center}
	\includegraphics[width=0.9\columnwidth]{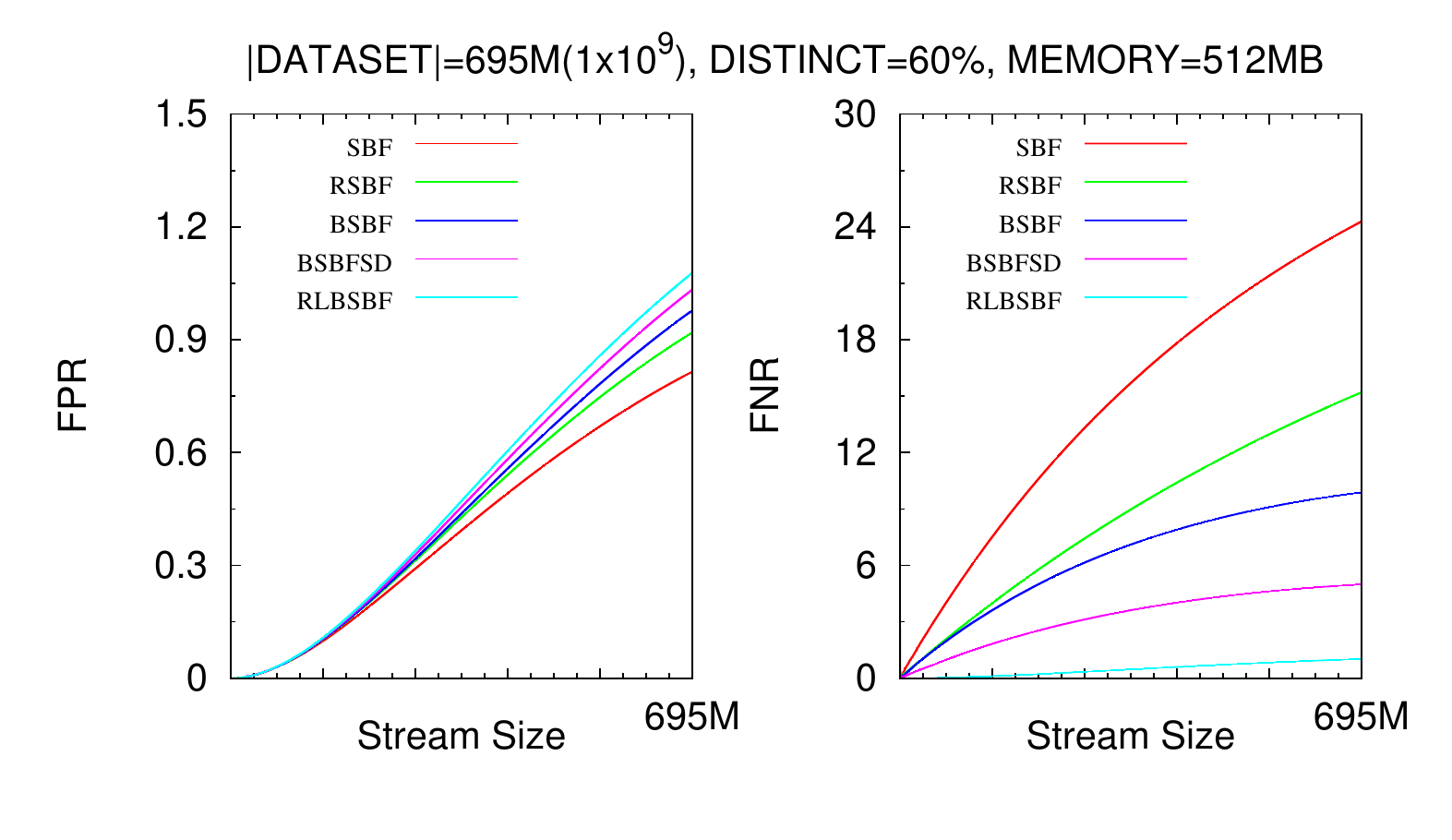}
	\caption{FPR and FNR Performances.}
	\label{fig:fpr_fnr60512}
\end{center}
\end{figure}
\begin{figure}[htbp]
\begin{center}
	\includegraphics[width=0.9\columnwidth]{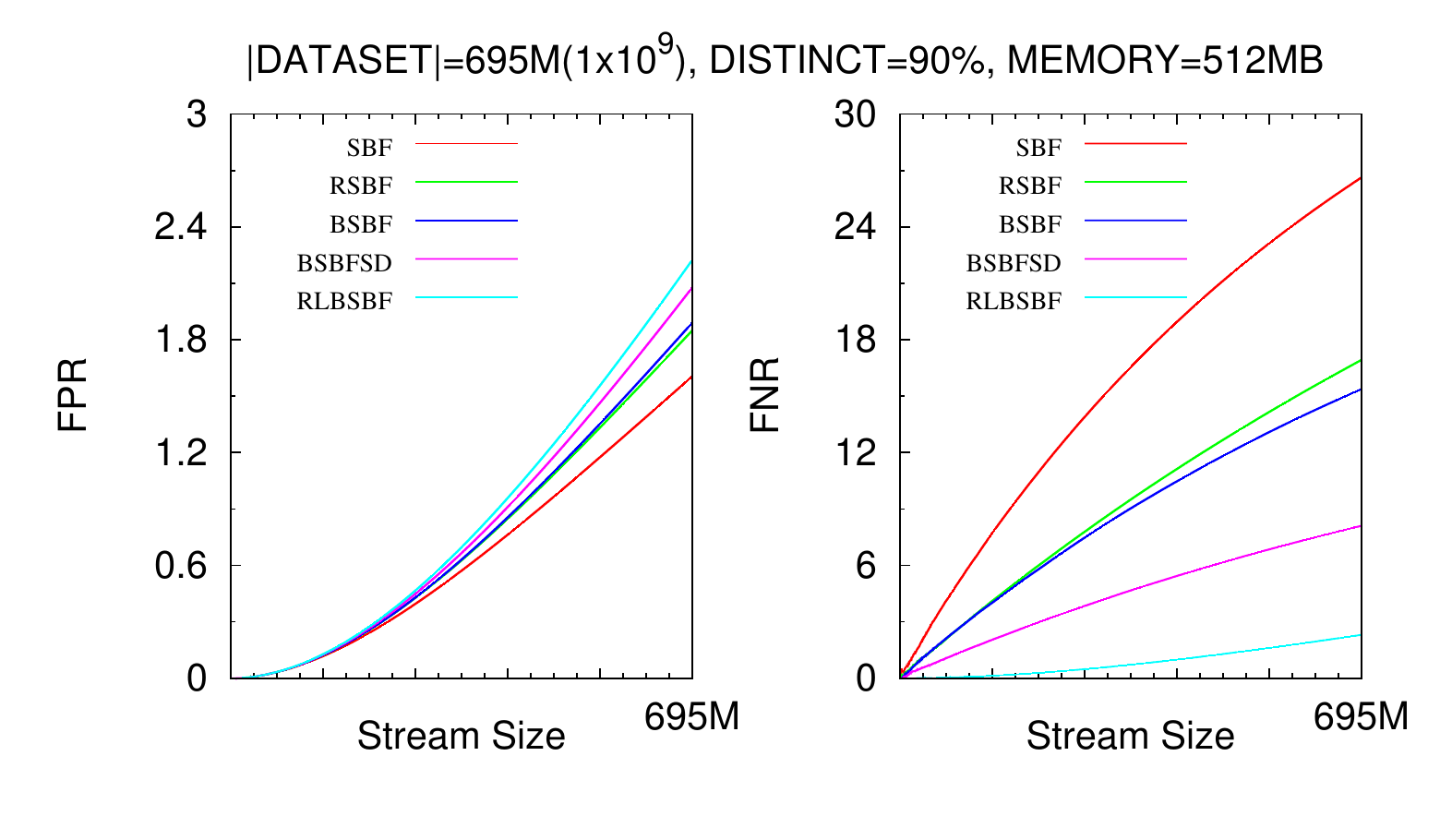}
	\caption{FPR and FNR Performances.}
	\label{fig:fpr_fnr90512}
\end{center}
\end{figure}
%
Figs.~\ref{fig:fpr_fnr15512},~\ref{fig:fpr_fnr60512} and~\ref{fig:fpr_fnr90512} compare the FPR and FNR of the various algorithms with increasing distinct percentage in data stream keeping the memory space fixed at 512MB. We generate 
three uniform random dataset of 695M elements with 15\%, 60\% and 90\% distinct element. We observe that the proposed algorithms outperforms SBF in FNR and convergence rates with comparable FPR.

Fig.~\ref{fig:fpr_fnr15512} shows FPR for all the algorithm is almost similar and stable for 695M dataset with 15\% distinct element. For FNR, our algorithms completely outperform SBF. Fig.~\ref{fig:fpr_fnr60512} 
represents the FPR and FNR scenario for 60\% distinct data stream of size 695M with 512MB size. We observe that at 695M data point, none of the algorithms have achieved stability in terms of FPR. In terms of FNR 
only BSBF, BSBFSD and RLBSBF are inching towards stability while SBF is far way off from stability.  Fig.~\ref{fig:fpr_fnr90512} shows the situation for 90\% distinct element with similar trends as discussed 
above. Hence our proposed algorithms always outperform SBF comprehensively in terms of FNR and also attain comparable of FPR and stability.



We next exhibit the stability of our algorithms. We compute the load of the Bloom Filter structure at every point of the stream. We define load as the number of $1$'s in the Bloom Filters normalized 
by the total memory space in bits. We present the load graph for various algorithms for 1B data with 15\% distinct.  Fig.~\ref{fig:ones} shows the load graph when 256MB memory space is allocated 
and when 512MB memory is used. It can be observed from the graph that all the algorithms nearly attain stability when 300M-400M data points have been 
processed. Also, as we increase the memory space from 256MB to 512MB, the stability is reached faster. The convergence rate of the algorithms are similar to each other.
%
\begin{figure}[htbp]
\begin{center}
	\includegraphics[width=0.9\columnwidth]{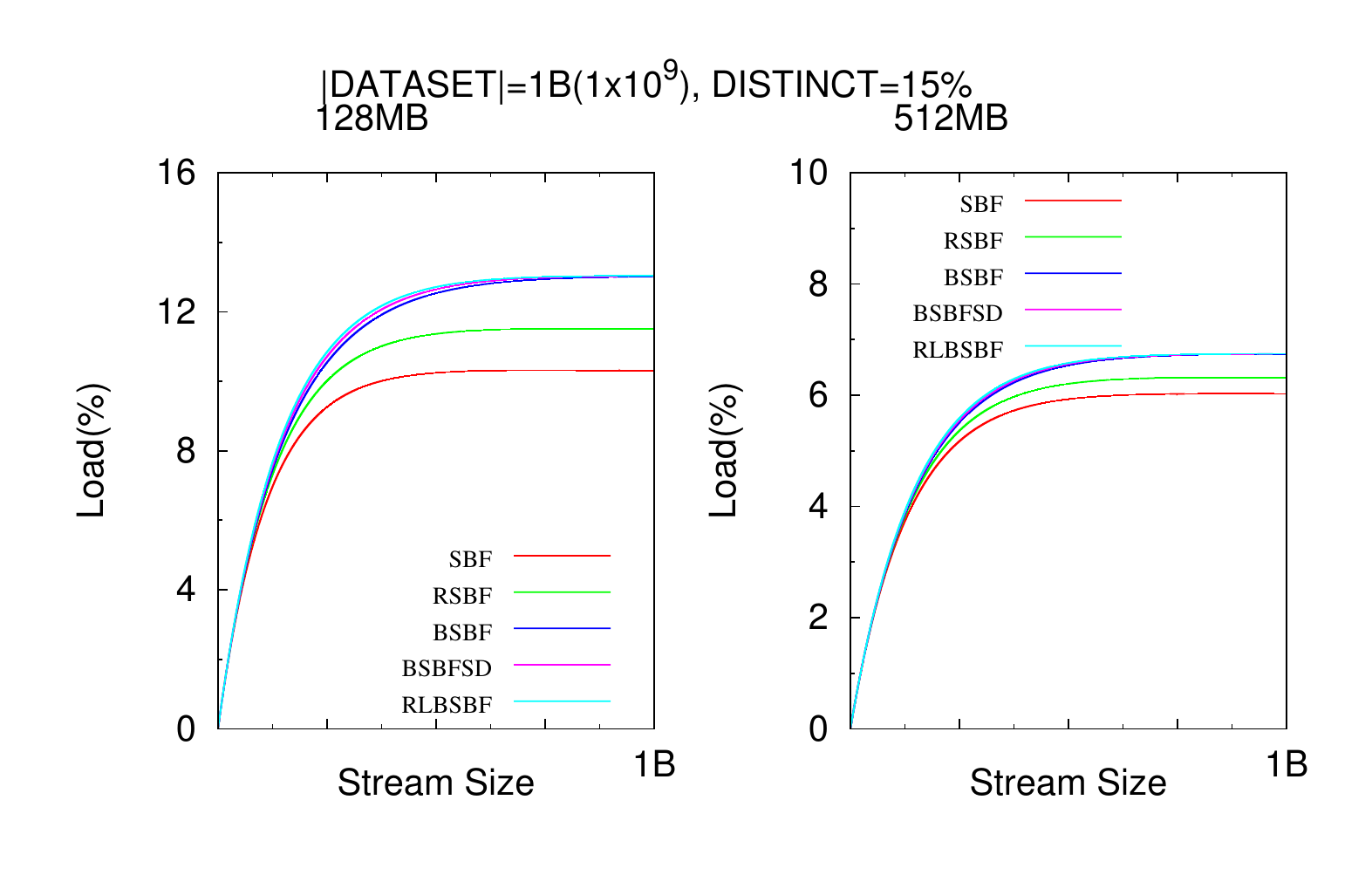}
	\caption{Stability Performance.}
	\label{fig:ones}
\end{center}
\end{figure}

\subsection{Detailed Analysis}
\label{subsec:detail}

In this section, we present detailed experimental analysis of the various algorithms that we have designed in the previous sections. We compare the performance of various algorithms against variation 
of memory used and percentage of distinct elements in the stream.
%
In the following comparisons we have set $k=2$ for RSBF, BSBF, BSBFSD, RLBSBF algorithms.

Table~\ref{695M:15distinct} presents the FPR and FNR with 695M records with 15\% distinct element in the stream. The memory size varies from 64MB to 512MB for the underlying Bloom Filter based data structures. 
Here we observe that although SBF performs reasonably well in terms of FPR for low memory (64MB), FNR degrades to an unacceptable limit ($53.26\%$). For 64MB memory, RLBSBF keeps the FPR and FNR to very low 
limits (FPR of 3.7\% and FNR of 1.3\%). But for 128MB memory all the five algorithms exhibits comparable FPR with lowest being SBF (0.74\%) and highest being RLBSBF(1.08\%). But FNR for BSBF, BSBFSD and RLBSBF 
improve drastically compared to that of SBF. For example, RLBSBF gains an improvement of 100x times over SBF in terms of FNR! This behavior continues for higher memory size and for 512MB memory, all the four 
proposed algorithms outperform SBF comprehensively in term of FNR while keeping the FPR level competitive. For example RLBSBF attains an improvement of 600x in respect to FNR when compared to SBF while FPR level 
for both remains almost the same.

\begin{table}[htbp]
\begin{center}
\begin{tabular}{|l|l|l|l|l|l|l|}
\hline
\multicolumn{7}{|c|}{Dataset:695M , Distinct:15\% } \\ \hline
Space &  & SBF & RSBF & BSBF & BSBFSD & RLBSBF   \\ \hline \hline
\multirow{2}{*}{64 MB} & \% FPR & 1.9319 & 2.6276 & 3.2569 & 3.5475 & 3.7064 \\ \cline{2-7}
& \% FNR & 53.2681 & 35.9014 & 8.7547 & 3.3299 & 1.3453 \\ \hline \hline
\multirow{2}{*}{128 MB} & \% FPR & 0.7414 & 0.8903 & 1.0128 & 1.0530 & 1.0821 \\ \cline{2-7}
& \% FNR & 37.7853 & 23.126 & 3.888 & 1.7048 & 0.3773 \\ \hline \hline
\multirow{2}{*}{256 MB} & \% FPR & 0.2384 & 0.2637 & 0.2822 & 0.2881 & 0.2924 \\ \cline{2-7}
& \% FNR & 23.8930 & 13.5047 & 1.8199 & 0.8551 & 0.1010 \\ \hline \hline
\multirow{2}{*}{512 MB} & \% FPR & 0.0688 & 0.0719 & 0.0747 & 0.0753 & 0.0759\\ \cline{2-7}
& \% FNR & 12.9392 & 7.3674 & 0.8794 & 0.4267 & 0.0262 \\ \hline \hline
\end{tabular}
\caption{Synthetic Dataset of 695M elements (15\% distinct)}
\label{695M:15distinct}
\end{center}
\end{table}

Table~\ref{695M:60distinct} presents the FPR and FNR with 695M records with 60\% distinct element in the stream. The memory size varies from 64MB to 512MB. For 8MB memory, as before SBF outperforms 
other algorithm in terms of FPR. But an FNR of 70.832\% makes SBF unsuitable for all practical purposes. As we increase the memory size to 128MB, 256MB and 512MB, we observe sharp fall in the FPR level 
of all the algorithms. For 512MB, all the algorithms offers FPR that is in the range of 0.8\%-1.07\%. But in terms of FNR, our proposed algorithms outperforms SBF significantly. Notably, RLBSBF achieves 
an improvement of around 24 times in terms of FNR compared to SBF. We also observe that with increase in memory space, FNR of RLBSBF falls sharply to a very low limit. For example, if we double the memory 
space from 128MB to 256MB and then 256MB to 512MB, FNR of RLBSBF drops 3 times in each occasion, hence performing the best among the algorithms in terms of FNR.

\begin{table}[htbp]
\begin{center}
\begin{tabular}{|l|l|l|l|l|l|l|}
\hline
\multicolumn{7}{|c|}{Dataset:695M , Distinct:60\% } \\ \hline
Space &  & SBF & RSBF & BSBF & BSBFSD & RLBSBF   \\ \hline \hline
\multirow{2}{*}{64 MB} & \% FPR & 6.7672 & 12.2408 & 16.0723 & 22.4491 & 22.7214 \\ \cline{2-7}
& \% FNR & 70.832 & 52.8855 & 39.9809 & 19.9394 & 20.2164 \\ \hline \hline
\multirow{2}{*}{128 MB} & \% FPR & 4.3075 & 6.7265 & 8.0154 & 9.7951 & 10.4734\\ \cline{2-7}
& \% FNR & 57.9713 & 39.2682 & 28.6708 & 14.7555 & 9.8051 \\ \hline \hline
\multirow{2}{*}{256 MB} & \% FPR & 2.0861 & 2.7654 & 3.0816 & 3.4335 & 3.6588\\ \cline{2-7}
& \% FNR & 42.0788 & 25.7648 & 17.7509 & 9.0574 & 3.4843 \\ \hline \hline
\multirow{2}{*}{512 MB} & \% FPR & 0.8151 & 0.9189 & 0.9779 & 1.0331 & 1.0775 \\ \cline{2-7}
& \% FNR & 24.2902 & 15.2035 & 9.8819 & 4.9892 & 1.0236\\ \hline \hline
\end{tabular}
\caption{Synthetic Dataset of 695M elements (60\% distinct)}
\label{695M:60distinct}
\end{center}
\end{table}

We present the FPR and FNR of 695M records with 90\% distinct element in Table~\ref{695M:90distinct}. The memory size again varies from 64MB to 512MB. For various memory size, the algorithms 
exhibits similar trends as before with all the proposed variations providing large improvements in terms of FNR over SBF while keeping the FPR at comparable level. For example at the cost of 
1.5x times worse FPR (from 1.6\% to 2.2\%), RLBSBF attains an improvement of 13x time in FNR over SBF.

\begin{table}[htbp]
\begin{center}
\begin{tabular}{|l|l|l|l|l|l|l|}
\hline
\multicolumn{7}{|c|}{Dataset:695M , Distinct:90\% } \\ \hline
Space &  & SBF & RSBF & BSBF & BSBFSD & RLBSBF   \\ \hline \hline
\multirow{2}{*}{64 MB} & \% FPR & 8.5028 & 16.3502 & 20.6486 & 31.104 & 30.1024 \\ \cline{2-7}
& \% FNR & 73.134 & 55.0285 & 45.7102 & 24.1625 & 26.8786 \\ \hline \hline
\multirow{2}{*}{128 MB} & \% FPR & 6.3252 & 10.6716 & 11.9222 & 15.7484 & 16.6894 \\ \cline{2-7}
& \% FNR & 61.1759 & 41.9224 & 36.5123 & 20.3118 & 16.4363 \\ \hline \hline
\multirow{2}{*}{256 MB} & \% FPR & 3.5908 & 5.0678 & 5.3168 & 6.3016 & 6.8635 \\ \cline{2-7}
& \% FNR & 45.4987 & 28.179 & 25.3195 & 13.8078 & 7.0784 \\ \hline \hline
\multirow{2}{*}{512 MB} & \% FPR & 1.6058 & 1.8497 & 1.8911 & 2.0783 & 2.2248 \\ \cline{2-7}
& \% FNR & 26.6313 & 16.9302 & 15.3766 & 8.1061 & 2.3059 \\ \hline \hline
\end{tabular}
\caption{Synthetic Dataset of 695M elements (90\% distinct)}
\label{695M:90distinct}
\end{center}
\end{table}

Table~\ref{1B:15distinct} presents the FPR and FNR with 1B records with 15\% distinct element in the stream. The memory size varies from 64MB to 512MB for the underlying Bloom Filter based data structures. 
Here we observe that SBF performs very poorly in terms of FNR for low memory size. For 64MB memory, RLBSBF keeps the FPR and FNR to very low limits and achieves almost 30x improvement over SBF in FNR while 
losing only 2.5x in FPR. As the memory size increases, FPR for all our algorithms becomes stable at a very low thresholds ($0.1\%-0.2\%$) similar to that of SBF. But the gain in FNR performance is enormous. 
While SBF has a FNR of $17.1336\%$ for 512MB memory, RSBF attains $10.2015\%$ FNR, nearly a 2x improvement. BSBF, BSBFSD, RLBSBF observes an improvement of 17x times, 27x times and 317x times respectively. 

\begin{table}[htbp]
\begin{center}
\begin{tabular}{|l|l|l|l|l|l|l|}
\hline
\multicolumn{7}{|c|}{Dataset:1B , Distinct:15\% } \\ \hline
Space &  & SBF & RSBF & BSBF & BSBFSD & RLBSBF \\ \hline \hline
\multirow{2}{*}{64 MB} & \% FPR & 2.9156 & 4.2891 & 5.5775 & 6.3441 & 6.6755 \\ \cline{2-7}
& \% FNR & 60.8981 & 43.1705 & 13.7096 & 4.6175 & 2.5795 \\ \hline \hline
\multirow{2}{*}{128 MB} & \% FPR & 1.2608 & 1.6079 & 1.9023 & 2.0181 & 2.0930 \\ \cline{2-7}
& \% FNR & 45.9566 & 29.5540 & 5.9096 & 2.4357 & 0.7400 \\ \hline \hline
\multirow{2}{*}{256 MB} & \% FPR & 0.4413 & 0.5059 & 0.5572 & 0.5727 & 0.5849 \\ \cline{2-7}
& \% FNR & 30.8269 & 18.1142 & 2.6956 & 1.2296 & 0.2026 \\ \hline \hline
\multirow{2}{*}{512 MB} & \% FPR & 0.1341 & 0.1431 & 0.1506 & 0.1526 & 0.1543\\ \cline{2-7}
& \% FNR & 17.1336 & 10.2015  & 1.2846 & 0.6139 & 0.0535 \\ \hline \hline
\end{tabular}
\caption{Synthetic Dataset of 1B elements (15\% distinct)}
\label{1B:15distinct}
\end{center}
\end{table}

Table~\ref{1B:60distinct} presents the FPR and FNR with 1B records with 60\% distinct element in the stream. The memory size varies, as before, from 64MB to 512MB for the underlying Bloom Filter based 
data structures. Here also we observe FNR for SBF degrades to an unacceptable limit for 64MB or 128MB memory size. For 64MB memory, RLBSBF keeps the FPR and FNR to reasonable limits. But as the memory 
size increases, FPR for all our algorithms becomes stable at a very low point ($1\%-2\%$) similar to that of SBF. But FNR performance is improved drastically for all our algorithms. While SBF has a FNR 
of $30.2739\%$ for 512MB memory, RSBF attains an FNR of $20.2770\%$, nearly a 1.5 times improvement. BSBF, BSBFSD, RLBSBF observes an improvement of 2.5 times, 5 times and 30 times respectively. This 
improvements come at the cost of 2 time more FPR in the worst case (RLBSBF).

\begin{table}[htbp]
\begin{center}
\begin{tabular}{|l|l|l|l|l|l|l|}
\hline
\multicolumn{7}{|c|}{Dataset:1B , Distinct:60\% } \\ \hline
Space &  & SBF & RSBF & BSBF & BSBFSD & RLBSBF   \\ \hline \hline
\multirow{2}{*}{64 MB} & \% FPR & 7.8507 & 15.0561 & 20.7283 & 31.3858 & 30.044 \\ \cline{2-7}
& \% FNR & 75.7271 & 58.6734 & 44.9042 & 21.0835 & 25.7526\\ \hline \hline
\multirow{2}{*}{128 MB} & \% FPR & 5.6378 & 9.5676 & 11.9472 & 15.5928 & 16.3554\\ \cline{2-7}
& \% FNR & 65.2456 & 46.6885 & 34.7514 & 17.7438 & 15.0161\\ \hline \hline
\multirow{2}{*}{256 MB} & \% FPR & 3.1629 & 4.5557 & 5.2369 & 6.0937 & 6.5378\\ \cline{2-7}
& \% FNR & 50.5726 & 32.6139 & 23.2001 & 11.9067 & 6.2079\\ \hline \hline
\multirow{2}{*}{512 MB} & \% FPR & 1.4504 & 1.6747 & 1.82011 & 1.9692 & 2.0788\\ \cline{2-7}
& \% FNR & 30.2739 & 20.2770 & 13.5658 & 6.8855 & 1.9897\\ \hline \hline
\end{tabular}
\caption{Synthetic Dataset of 1B elements (60\% distinct)}
\label{1B:60distinct}
\end{center}
\end{table}

Table~\ref{1B:90distinct} presents the FPR and FNR with 1B records with 90\% distinct element in the stream. The memory size varies, as before, from 64MB to 512MB. Here also we observe that at low memory 
only RLBSBF has both FPR and FNR at around 50\%. All other algorithm exhibits either very poor FPR or very poor FNR. However as the memory size increases, FPR for all our algorithms becomes stable at $2\%-4\%$ 
similar to that of SBF. While SBF attains a FPR of 2.72\%, RSBF, BSBF, BSBFSD and RLBSBF also attains a FPR that is comparable with that of SBF. But the gain in FNR performance is huge. While SBF has a FNR of 
$32.8335\%$ for 512MB memory, RSBF has $22\%$ FNR, nearly a 1.5 times improvement. BSBF, BSBFSD, RLBSBF observes an improvement of 1.5 times, 3 times and 8 times respectively. 

\begin{table}[htbp]
\begin{center}
\begin{tabular}{|l|l|l|l|l|l|l|}
\hline
\multicolumn{7}{|c|}{Dataset:1B , Distinct:90\% } \\ \hline
Space &  & SBF & RSBF & BSBF & BSBFSD & RLBSBF   \\ \hline \hline
\multirow{2}{*}{64 MB} & \% FPR & 9.2617 & 18.6342 & 25.0083 & 40.5501 & 36.7959\\ \cline{2-7}
& \% FNR & 77.5273 & 60.3738 & 49.2073 & 23.8819 & 31.0134\\ \hline \hline
\multirow{2}{*}{128 MB} & \% FPR & 7.5909 & 13.8357 & 16.4207 & 23.192 & 23.5977\\ \cline{2-7}
& \% FNR & 68.0553 & 49.2204 & 41.7178 & 22.89 & 22.2474\\ \hline \hline
\multirow{2}{*}{256 MB} & \% FPR & 5.0231 & 7.7876 & 8.3846 & 10.4808 & 11.3415\\ \cline{2-7}
& \% FNR & 54.0293 & 35.2653 & 31.2527 & 17.2991 & 11.5338\\ \hline \hline
\multirow{2}{*}{512 MB} & \% FPR & 2.7427 & 3.2263 & 3.33317 & 3.7953 & 4.1133\\ \cline{2-7}
& \% FNR & 32.8335 & 22.3998 & 20.2695 & 10.8729 & 4.2852\\ \hline \hline
\end{tabular}
\caption{Synthetic Dataset of 1B elements (90\% distinct)}
\label{1B:90distinct}
\end{center}
\end{table}

Hence, we observe that for both synthetic and real datasets of upto 1 billion records and varying percentage of distinct elements, the proposed algorithms in this work clearly outperforms SBF 
in terms of FNR, attaining an improvement of more than 300x times in certain cases. Also, for reasonable amount of memory FPR performance of all the algorithms turn out to be similar. Coupled 
with enhanced convergence rates compared to that of SBF, we present novel and efficient algorithms for the de-duplication problem.

\section{Conclusions and Future Work}
\label{sec:conc}

Real-time de-duplication or data redundancy removal for streaming datasets poses a challenging problem. In this work we have presented novel Bloom Filter based 
algorithms to tackle the problem efficiently. Using a novel combination of reservoir sampling and Bloom Filters we have proposed RSBF to obtain enhanced FNR 
and faster convergence to stability at comparable FPR with that of SBF. We further proposed BSBF encompassing a biased sampling method with Bloom Filters to 
obtain better FNR for varied applications requiring very low FNR tolerance. Variations of BSBF have also been presented in this work with different deletion 
designs to counter the effects to multiple element deletion in Bloom Filters. Finally a randomized load balanced algorithms has also been presented to provide 
a balanced performance on both the FPR and FNR fronts. These features make the proposed algorithms extremely efficient and applicable to real life scenarios. 

Using detailed theoretical results, we have proven the enhanced performance of the proposed algorithms in terms of FNR and convergence rates (stability). We 
demonstrate real-time in-memory DRR using both real and synthetically generated datasets of upto 1 billion records. We show FNR improvement over a vast 
range from 2x to 300x over existing results. To the best of our knowledge this work achieves the best FNR and convergence rates known with the same memory 
requirements as that of the competing algorithms. In future, we hope to study the effects of other biasing and sampling functions to further decrease the 
FNR. Investigations over the use of other structures and parallelizing the proposed algorithms may in turn lead to further enhancement and advancements 
in the field of parallel data redundancy removal research.





\bibliographystyle{elsarticle-num}
\bibliography{els}

\begin{thebibliography}{10}
\expandafter\ifx\csname url\endcsname\relax
  \def\url#1{\texttt{#1}}\fi
\expandafter\ifx\csname urlprefix\endcsname\relax\def\urlprefix{URL }\fi
\expandafter\ifx\csname href\endcsname\relax
  \def\href#1#2{#2} \def\path#1{#1}\fi

\bibitem{20}
V.~K. Garg, A.~Narang, S.~Bhattacherjee, Real-time memory efficient data
  redundancy removal algorithm, in: CIKM, 2010, pp. 1259--1268.

\bibitem{24}
A.~Heydon, M.~Najork, Mercator: A scalable, extensive web crawler, in: World
  Wide Web, Vol.~2, 1999, pp. 219--229.

\bibitem{31}
A.~Metwally, D.~Agrawal, A.~E. Abbadi, Duplicate detection in click streams,
  in: WWW, 2005, pp. 12--21.

\bibitem{19}
H.~Garcia-Molina, J.~D. Ullman, W.~J., Database System Implementation, Prentice
  Hall, 1999.

\bibitem{8}
M.~Bilenko, R.~J. Mooney, Adaptive duplicate detection using learnable string
  similarity measures., in: Proc. SIGKDD, 2003, pp. 39--48.

\bibitem{41}
M.~Weis, F.~Naumann, Dogmatrix tracks down duplicates in xml., in: Proc. ACM
  SIGMOD, 2005, pp. 431--442.

\bibitem{5}
N.~Alon, Y.~Matias, M.~Szegedy, The space complexity of approximating the
  frequency moments, in: STOC, 1996, pp. 20--29.

\bibitem{23}
P.~Gupta, N.~McKeown, Packet classification on multiple fields, in: SIGCOMM,
  1999, pp. 147--160.

\bibitem{21}
J.~Gehrke, F.~Korn, J.~Srivastava, On computing correlated aggregates over
  continual data streams, in: SIGMOD, 2001, pp. 13--24.

\bibitem{36}
M.~Reiter, V.~Anupam, A.~Mayer, Detecting hit-shaving in click-through payment
  schemes, in: USENIX, 1998, pp. 155--166.

\bibitem{11}
A.~Chowdhury, O.~Frieder, D.~Grossman, M.~McCabe, Collection statistics for
  fast duplicate document detection, ACM Trans. on Information Systems 20~(2)
  (2002) 171--191.

\bibitem{12}
J.~Conrad, X.~Guo, C.~Schriber, Online duplicate document detection: Signature
  reliability in a dynamic retrieval environment, in: CIKM, 2003, pp. 443--452.

\bibitem{29}
D.~Lee, J.~Hull, Duplicate detection in symbolically compressed documents, in:
  ICDAR, 1999, pp. 305--308.

\bibitem{1}
A.~Adya, W.~J. Bolosky, M.~Castro, G.~Cermak, R.~Chaiken, J.~R. Douceur,
  J.~Howell, R.~J. Lorch, M.~Theimer, R.~Wattenhofer, Farsite: Federated,
  available, and reliable storage for an incompletely trusted environment., in:
  OSDI, 2002, pp. 1--14.

\bibitem{15}
F.~Douglis, J.~Lavoie, J.~M. Tracey, P.~Kulkarni, P.~Kulkarni, Redundancy
  elimination within large collections of files., in: USENIX, 2004, pp. 59--72.

\bibitem{39}
N.~Tolia, M.~Kozuch, M.~Satyanarayanan, B.~Karp, T.~C. Bressoud, A.~Perrig,
  Opportunistic use of content addressable storage for distributed file
  systems, in: USENIX, 2003, pp. 127--140.

\bibitem{35}
S.~Quinlan, S.~Dorward, Venti: A new approach to archival storage, in: FAST,
  2002, pp. 89--101.

\bibitem{27}
N.~Jain, M.~Dahlin, R.~Tewari, Taper: Tiered approach for eliminating
  redundancy in replica synchronization, in: FAST, 2005, pp. 281--294.

\bibitem{9}
B.~H. Bloom, Space/time trade-offs in hash coding with allowable errors,
  Commun. ACM 13~(7) (1970) 422--426.

\bibitem{10}
A.~Z. Broder, M.~Mitzenmacher, Network applications of bloom filters: A survey,
  Internet Mathematics 1~(4) (2003) 485--509.

\bibitem{16}
L.~Fan, P.~Cao, J.~Almeida, Z.~Broder, Summary cache: a scalable wide area web
  cache sharing protocol, in: IEEE/ACM Transaction on Networking, 2000, pp.
  281--293.

\bibitem{32}
M.~Mitzenmacher, Compressed bloom filters, in: IEEE/ACM Transaction on
  Networking, 2002, pp. 604--612.

\bibitem{28}
A.~Kumar, J.~Xu, J.~Wang, O.~Spatschek, L.~Li, Space-code bloom filter for
  efficient per-flow traffic measurement, in: IEEE INFOCOM, 2004, pp.
  1762--1773.

\bibitem{37}
C.~Saar, M.~Yossi, Spectral bloom filters, in: ACM SIGMOD, 2003, pp. 241--252.

\bibitem{38}
H.~Shen, Y.~Zhang, Improved approximate detection of duplicates for data
  streams over sliding windows, J. of Computer Science and Technology 23~(6).

\bibitem{17}
W.~Feng, D.~Kandlur, D.~Sahu, K.~Shin, Stochastic fair blue: A queue management
  algorithm for enforcing fairness, in: IEEE INFOCOM, 2001, pp. 1520--1529.

\bibitem{7}
F.~Baboescu, G.~Varghese, Scalable packet classification, in: ACM SIGCOMM,
  2001, pp. 199--210.

\bibitem{14}
S.~Dharmapurikar, P.~Krishnamurthy, D.~Taylor, Longest prefix matching using
  bloom filters, in: ACM SIGCOMM, 2003, pp. 201--212.

\bibitem{26}
Y.~Hua, B.~Xiao, A multi-attribute data structure with parallel bloom filters
  for network services, in: International Conference on High Performance
  Computing, 2006, pp. 277--288.

\bibitem{30}
M.~Little, N.~Speirs, S.~Shrivastava, Using bloom filters to speed-up name
  lookup in distributed systems, The Computer Journal (Oxford University Press)
  45~(6) (2002) 645 -- 652.

\bibitem{13}
F.~Deng, D.~Rafiei, Approximately detecting duplicates for streaming data using
  stable bloom filters, in: SIGMOD, 2006, pp. 25--36.

\bibitem{18}
P.~Flajolet, G.~N. Martin, Probabilistic counting algorithms for database
  applications., Comput. Syst. Science 31~(2) (1985) 182--209.

\bibitem{6}
B.~Babcock, M.~Datar, R.~Motwani, Sampling from moving window over streaming
  data, in: SODA, 2002, pp. 633--634.

\bibitem{22}
P.~Gibbons, Y.~Mattias, New sampling-based summary statistics for improving
  approximate query answers, in: ACM SIGMOD, 1998, pp. 331--342.

\bibitem{33}
P.Gibbons, Distinct sampling for highly accurate answers to distinct value
  queries and event reports, in: VLDB, 2001, pp. 541--550.

\bibitem{2}
C.~Aggarwal, P.~Yu, Data Streams: Models and Algorithms, Springer, 2007.

\bibitem{40}
J.~S. Vitter, Random sampling with a reservoir, ACM Trans. on Mathematical
  Software 11~(1) (1985) 37--57.

\bibitem{3}
C.~C. Aggarwal, On biased reservoir sampling in the presence of stream
  evolution, in: VLDB, 2006, pp. 607--618.

\bibitem{42}
S.~Dutta, S.~Bhattacherjee, A.~Narang, Towards "intelligent compression" in
  streams: A biased reservoir sampling based bloom filter approach, in: EDBT,
  2012, pp. 228--238.

\end{thebibliography}







\end{document}